\newcommand{\Rset}{\mathbb{R}}
\newcommand{\Zset}{\mathbb{Z}}
\newcommand{\Qset}{\mathbb{Q}}
\newtheorem{thm}{Theorem}
\newtheorem{lem}{Lemma}
\title{Risk averse single machine scheduling - complexity and approximation}
\author{Adam Kasperski$^\dag$  Pawe{\l} Zieli{\'n}ski$^\ddag$\\        
	 {\small \textit{$^\dag$Department of Operations Research,
	 Faculty of Computer Science and Management,}}\\
	{\small \textit{Wroc{\l}aw University of Science and Technology,  Wroc{\l}aw, Poland}}\\
	 {\small \textit{$^\ddag$Department of Computer Science,
          Faculty of Fundamental Problems of Technology,}}\\
	{\small \textit{Wroc{\l}aw University of  Science and Technology,  Wroc{\l}aw, Poland}}
}
 \date{}
\begin{document}

\maketitle

\begin{abstract}
In this paper a class of single machine scheduling problems is considered. It is assumed that job processing times and due dates can be uncertain and they are specified in the form of discrete scenario set. A probability distribution in the scenario set is known. In order to choose a schedule some risk criteria such as the value at risk (VaR) an conditional value at risk (CVaR) are used. Various positive and negative complexity results are provided for basic single machine scheduling problems. In this paper new complexity results are shown and some known complexity results are strengthen.
 \end{abstract}

\section{Introduction}

Scheduling under risk and uncertainty has attracted considerable attention in recent literature. In practical applications of scheduling models the  exact values of input parameters, such as job processing times or due dates, are often unknown in advance. Hence, a solution must be computed, before the true realization of the input data reveals. Typically, a \emph{scenario set} $\mathcal{U}$ is a part of the input, which contains all possible realizations of the problem parameters, called \emph{scenarios}. If the probability distribution in $\mathcal{U}$ is unknown, then \emph{robust optimization} framework can be applied and solution performance in a worst case is optimized. First robust scheduling problems have been discussed in~\cite{DK95, KY97, YK93}. Two uncertainty representations, namely a \emph{discrete} and \emph{interval} ones were considered. In the former, scenario set $\mathcal{U}$ contains a finite number of distinct scenarios. In the latter, for each uncertain parameter an interval of its possible values is specified and $\mathcal{U}$ is the Cartesian product of these intervals. In order to compute a solution the minmax and minmax regret criteria can be applied. Minmax (regret) scheduling problems have various complexity properties, depending on the cost function and the uncertainty representation (see, e.g.,~\cite{AV00, AL06, K05, AAK11, DR16}). For a survey of  minmax (regret) scheduling problems
we refer the reader to~\cite{KZ14s}.

The robust scheduling models have well known drawbacks. Minimizing the maximum cost can lead to very conservative solutions. The reason is that the probability of occurrence of the worst scenario may be very small and the information connected with the remaining scenarios is ignored while computing a solution. One method of overcoming this drawback was given in~\cite{KZ16d}, where the OWA criterion, proposed in~\cite{YA88},
 was applied to compute an optimal schedule. In this approach, a set of weights is specified by the decision maker, which reflect his attitude towards a risk. The OWA operator contains the maximum, average and Hurwicz criteria as special cases. However, it does not take into account  a probabilistic information, which may be available for scenario set~$\mathcal{U}$.

In the case, when a probability distribution in $\mathcal{U}$ is known, the stochastic scheduling models are considered. The parameters of scheduling problem are then random variables with known probability distributions. Under this assumption, the expected solution performance is typically optimized  
(see, e.g.,~\cite{MSU99,PI08,SU05,SSU16}). However, this criterion assumes that the decision maker is risk neutral and leads to solutions that guarantee an optimal long run performance. Such a solution may be questionable, for example, if it is implemented only once (see, e.g.,~\cite{KY97}). 
In this case, the decision maker attitude towards a risk should be taken into account.

In~\cite{KPU02} a criterion called \emph{conditional value at risk} (CVaR) was applied to a stochastic portfolio selection problem. Using this criterion, the decision maker provides a parameter $\alpha\in [0,1)$, which reflects his attitude towards a risk. When $\alpha=0$, then CVaR becomes the expectation. However, for greater value of $\alpha$, more attention is paid to the worst outcomes, which fits into the robust optimization framework. The conditional value at risk is closely connected with the \emph{value at risk} (VaR) criterion
(see, e.g.,~\cite{P00}), which is just the $\alpha$-quantile of a random outcome. Both risk criteria have attracted considerable attention in stochastic optimization~(see, e.g.,~\cite{NST15, ZSZD17, NIK10, O12}). This paper is motivated by the recent papers~\cite{SSL14} and~\cite{NBN17}, in which the following stochastic scheduling models were discussed. We are given a scheduling problem with discrete scenario set $\mathcal{U}$. Each scenario $\xi_i\in \mathcal{U}$ is a realization of the problem parameters (for example, processing times and due dates), which can occur with a known positive probability ${\rm Pr}[\xi_i]$.  The cost of a given schedule is a discrete random variable with the probability distribution induced by the probability distribution in $\mathcal{U}$. The VaR and CVaR criteria, with a fixed level $\alpha$, are used to compute a best solution. 

In~\cite{SSL14} and~\cite{NBN17} solution methods based on mixed integer programming models were proposed to minimize VaR and CVaR in scheduling problems with the total weighted tardiness criterion.
The aim of this paper is to analyze the models discussed  in~\cite{SSL14} and~\cite{NBN17}  from the complexity point of view. We will consider the class of single machine scheduling problems with basic cost functions, such as the maximum tardiness, the total flow time, the total tardiness and the number of late jobs. We will discuss also the weighted versions of these cost functions. We provide a picture of computational complexity for all these problems by proving some positive and negative complexity results. Since VaR and CVaR  generalize the maximum criterion, we can use some results known from robust minmax scheduling. The complexity results for minmax versions of single machine scheduling problems under discrete scenario set were obtained 
in~\cite{AAK11, AC08,DK95, MNO13}. In this paper we will show that some of these results can be strengthen.

This paper is organized as follows. In Section~\ref{sec1} we recall the definitions of the VaR and CVaR criteria and show their properties, which will be used later on. In Section~\ref{sec2} the problems discussed in this paper are defined. In Section~\ref{sec3} some general relationships between the problems with various risk criteria are shown. Finally, Sections~\ref{sec4} and~\ref{sec5} contain some new negative and positive complexity results for the the considered problems. These results are summarized in the tables presented in Section~\ref{sec2}.


\section{The risk criteria}
\label{sec1}

Let $\rm Y$ be a random variable. We  will consider the following risk criteria~\cite{P00,RU00}:
  \begin{itemize}
\item \emph{Value at Risk} ($\alpha$-quantile of $\mathrm{Y}$): $$
{\bf VaR}_\alpha[\rm Y]=\inf\{t: {\rm Pr}[Y\leq t] \geq \alpha\}, \alpha\in(0,1],$$
\item \emph{Conditional value at risk}:
$${\bf CVaR}_\alpha[\rm Y]=\inf\{\gamma+\frac{1}{1-\alpha}{\bf E}[\rm Y-\gamma]^+: \gamma \in \Rset\}, \alpha\in [0,1),$$
\end{itemize}
where $[x]^+=\max\{0,x\}$.
Assume that $\mathrm{Y}$ is a discrete random variable taking nonnegative values $b_1,\dots, b_K$. Then $ {\bf VaR}_\alpha[\rm Y]$ and ${\bf CVaR}_\alpha[\rm Y]$ can be computed by using the following programs, respectively (see, e.g.,~\cite{NBN17,O12,RU00}):
\begin{align}
		\text{(a)}\;\;\;\;\;& \min  \theta & \text{(b)}\;\;\;\;\;&\min   \gamma + \frac{1}{1-\alpha}\sum_{i\in [K]} {\rm Pr}[\mathrm{Y}=b_k] u_k\nonumber\\
		\text{s.t. }& b_k-\theta\leq M \beta_k,\;\;\; k\in [K] &\text{s.t. }& \gamma + u_k \geq  \displaystyle b_k,\;\;\; k\in [K] \label{exVC1}\\
		 &\sum_{k\in [K]} {\rm Pr}[\mathrm{Y}=b_k]  \beta_k \leq 1-\alpha &&  u_k\geq 0,\;\;\;  k\in [K]\nonumber\\ 
		 & \beta_k\in \{0,1\},\;\;\;  k\in [K] &&	\nonumber 	
\end{align}
where $M\geq \max\{b_1,\dots, b_K\}$ and $[K]=\{1,\dots,K\}$.  Notice that (\ref{exVC1})b is a linear programming problem.  In the following, we will use the following dual to~(\ref{exVC1})b:
\begin{equation}
\label{cvarmod1}
	\begin{array}{lll}
		\displaystyle \max  & \displaystyle \sum_{k\in [K]} b_k r_k  \\
					\text{s.t.} & \displaystyle\sum_{k\in [K]} r_k=1\\
						       &0 \leq r_k\leq \frac{{\rm Pr}[\mathrm{Y}=b_k] }{1-\alpha}, & k\in [K]
	\end{array}	
\end{equation}
The equality constraint in~(\ref{cvarmod1}) follows from the fact that all $b_k$, $k\in [K]$, are nonnegative.
Substituting $r_k=q_k/(1-\alpha)$ into~(\ref{cvarmod1}), we get the following equivalent formulation for ${\bf CVaR}_\alpha[\rm Y]$:
\begin{equation}
\label{cvarmod2}
	\begin{array}{lll}
		\displaystyle \max  & \displaystyle \frac{1}{1-\alpha} \sum_{k\in [K]} b_k   q_k\\
					\text{s.t.} & \displaystyle\sum_{k\in [K]} q_k=1-\alpha\\
						       &0 \leq q_k\leq  {\rm Pr}[\mathrm{Y}=b_k],  & k\in [K]
	\end{array}	
\end{equation}
Program~(\ref{cvarmod2}) can be solved by using a greedy method, which is illustrated in Figure~\ref{fig0}. Namely, we fix the optimal values of $q_k$ by greedily distributing the amount $1-\alpha$ among the largest values of $b_i$.
It is easy to see that ${\bf CVaR}_{0}[{\rm Y}]={\bf E}[{\rm Y}]=\sum_{k\in [K]} b_k {\rm Pr}[\mathrm{Y}=b_k]$. On the other hand, ${\bf CVaR}_{1-\epsilon}[{\rm Y}]= {\bf VaR}_1[{\rm Y}]={\bf Max}[{\rm Y}]=\max_{k\in [K]} b_k$ for sufficiently small $\epsilon>0$ and any probability distribution.

\begin{figure}[ht]
	\centering
	\includegraphics[width=6cm]{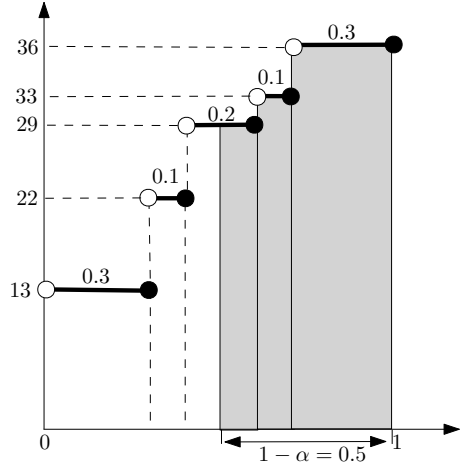}
	\caption{A computation of ${\bf CVaR}_{0.5}[{\rm Y}]$ for $\rm Y$ taking the values of 13, 22, 29, 33, and 36 with the probabilities 0.3, 0.1, 0.2, 0.1, 0.3, respectively. The value of ${\bf CVaR}_{0.5}[{\rm Y}]$ is the grey area  divided by $1-\alpha=0.5$.} \label{fig0}
\end{figure}

We now show several properties of the  risk measures which will be used later on in this paper.
 \begin{lem}
\label{lem01}
	Let $\mathrm{Y}$ be a discrete random variable which can take $K$ nonnegative values $b_1,\dots, b_K$. 
	The following inequalities hold for each $\alpha\in [0,1)$:
	\begin{equation}
		{\bf E}[\mathrm{Y}]\leq {\bf CVaR}_\alpha [\mathrm{Y}] \leq
		 \min\left \{\frac{1}{{\Pr}_{\min}},\frac{1}{1-\alpha} \right\} {\bf E}[\mathrm{Y}],
	\end{equation}
	where ${\rm Pr}_{\min}=\min_{k\in [K]} {\rm Pr}[\mathrm{Y}=b_k]$.
\end{lem}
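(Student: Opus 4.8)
The plan is to work entirely from the linear programming characterization~(\ref{cvarmod2}) of ${\bf CVaR}_\alpha[\mathrm{Y}]$, since both inequalities then reduce to exhibiting one feasible point and to a pair of coordinatewise comparisons against the constraints. Throughout I write $p_k={\rm Pr}[\mathrm{Y}=b_k]$, so that ${\bf E}[\mathrm{Y}]=\sum_{k\in[K]}b_k p_k$ and the feasible region of~(\ref{cvarmod2}) is $\{q:\sum_{k\in[K]} q_k=1-\alpha,\ 0\le q_k\le p_k\}$.

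For the lower bound ${\bf E}[\mathrm{Y}]\le{\bf CVaR}_\alpha[\mathrm{Y}]$ I would simply substitute the candidate $q_k=(1-\alpha)p_k$. It is feasible: it is nonnegative, bounded by $p_k$ since $1-\alpha\le 1$, and it sums to $(1-\alpha)\sum_k p_k=1-\alpha$. Its objective value is $\tfrac{1}{1-\alpha}\sum_k b_k(1-\alpha)p_k={\bf E}[\mathrm{Y}]$, and since~(\ref{cvarmod2}) is a maximization problem, its optimum is at least this large.

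The upper bound is a minimum of two quantities, so I would establish the two estimates ${\bf CVaR}_\alpha[\mathrm{Y}]\le\tfrac{1}{1-\alpha}{\bf E}[\mathrm{Y}]$ and ${\bf CVaR}_\alpha[\mathrm{Y}]\le\tfrac{1}{{\rm Pr}_{\min}}{\bf E}[\mathrm{Y}]$ separately, each holding for every feasible $q$. The first is immediate from $q_k\le p_k$ together with $b_k\ge 0$. For the second the key observation is that $q_k\ge 0$ and $\sum_k q_k=1-\alpha$ force $q_k\le 1-\alpha$, so that $\tfrac{q_k}{1-\alpha}\le 1\le\tfrac{p_k}{{\rm Pr}_{\min}}$ because $p_k\ge{\rm Pr}_{\min}$; multiplying by $b_k\ge 0$ and summing over $k$ gives the claim. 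Taking the minimum of the two bounds then completes the argument.

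I do not expect a genuine obstacle here: everything rests on feasibility and the two coordinatewise inequalities $q_k\le p_k$ and $q_k\le 1-\alpha$. The only point needing care is to derive the $1/{\rm Pr}_{\min}$ factor from the equality constraint $\sum_k q_k=1-\alpha$ rather than from $q_k\le p_k$ directly, and to use the nonnegativity of the $b_k$ (the same hypothesis that underlies the equality constraint in~(\ref{cvarmod1})) so that the term-by-term estimates are valid. As a sanity check, at $\alpha=0$ the factor is $\min\{1/{\rm Pr}_{\min},1\}=1$, which recovers the identity ${\bf CVaR}_0[\mathrm{Y}]={\bf E}[\mathrm{Y}]$ noted above.
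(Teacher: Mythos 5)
Your proof is correct, and it differs from the paper's in two of the three inequalities, even though both arguments live in the same LP characterization of ${\bf CVaR}_\alpha$. For the lower bound, the paper simply asserts that ${\bf E}[\mathrm{Y}]\leq {\bf CVaR}_\alpha[\mathrm{Y}]$ ``follows directly from the definition,'' whereas you actually prove it by exhibiting the feasible point $q_k=(1-\alpha)\,{\rm Pr}[\mathrm{Y}=b_k]$ in~(\ref{cvarmod2}) and noting its objective value is exactly ${\bf E}[\mathrm{Y}]$ --- a cleaner, self-contained argument. The $\frac{1}{1-\alpha}$ bound is essentially identical in both (coordinatewise comparison of the optimal multipliers against their upper bounds). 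The $\frac{1}{{\rm Pr}_{\min}}$ bound is where the routes genuinely diverge: the paper uses the fact that the optimum of~(\ref{cvarmod1}) is a convex combination of $b_1,\dots,b_K$, hence ${\bf CVaR}_\alpha[\mathrm{Y}]\leq {\bf Max}[\mathrm{Y}]=b_{\max}\leq \frac{1}{{\rm Pr}_{\min}}{\bf E}[\mathrm{Y}]$, while you avoid the detour through $b_{\max}$ entirely via the term-by-term estimate $\frac{q_k}{1-\alpha}\leq 1\leq \frac{{\rm Pr}[\mathrm{Y}=b_k]}{{\rm Pr}_{\min}}$, which follows from the equality constraint plus nonnegativity. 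Your version is more uniform --- all three inequalities come from feasibility and coordinatewise bounds in a single program --- and it correctly flags the subtle point that the $1/{\rm Pr}_{\min}$ factor must come from the constraint $\sum_k q_k=1-\alpha$ rather than from $q_k\leq {\rm Pr}[\mathrm{Y}=b_k]$. What the paper's route buys in exchange is the intermediate fact ${\bf CVaR}_\alpha[\mathrm{Y}]\leq {\bf Max}[\mathrm{Y}]$, which is of independent interest and consistent with the boundary behavior ${\bf CVaR}_{1-\epsilon}[\mathrm{Y}]={\bf Max}[\mathrm{Y}]$ noted earlier in the section.
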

\begin{proof}
	Fix $\alpha\in [0,1)$.
	The inequality ${\bf E}[\mathrm{Y}]\leq {\bf CVaR}_\alpha [\mathrm{Y}]$ follows directly from the definition of the expected value and the conditional value at risk. We now prove the second inequality. Let $r^*_1, \dots r^*_k$ be the optimal values in~(\ref{cvarmod1}). Then the inequality
$${\bf CVaR}_\alpha [\mathrm{Y}]=\sum_{k\in [K]} r^*_k b_k \leq \sum_{k\in [K]} \frac{{\rm Pr}[\mathrm{Y}=b_k]}{(1-\alpha)} b_k= \frac{1}{1-\alpha}{\bf E}[\mathrm{Y}]$$ 
holds. Since the value of ${\bf CVaR}_\alpha [\mathrm{Y}]$ 
is a convex combination of $b_1,\ldots,b_k$ (see (\ref{cvarmod1})), we have
${\bf CVaR}_\alpha [\mathrm{Y}] \leq {\bf Max}[\mathrm{Y}]=b_{\max}\leq \sum_{k\in [K]} \frac{{\rm Pr}[\mathrm{Y}=b_k]}{{\rm Pr}_{\min}} b_k=\frac{1}{{\rm Pr}_{\min}}{\bf E}[\mathrm{Y}],$
and the lemma follows.
\end{proof}

\begin{lem}
\label{lemowa1}
		Let $\mathrm{X}$ and $\mathrm{Y}$ be two discrete random variables taking nonnegative values
		 $a_1,\dots, a_K$, and $b_1,\dots, b_K$, respectively, with 
		${\rm Pr}[\mathrm{X}=a_i]={\rm Pr}[\mathrm{Y}=b_i]$ and $a_i\leq \gamma b_i$ for each $i\in [K]$ and some fixed $\gamma \geq 0$. Then ${\bf CVaR}_\alpha[\mathrm{X}]\leq \gamma {\bf CVaR}_\alpha[\mathrm{Y}]$ for each $\alpha\in [0,1)$ and ${\bf VaR}_\alpha[\mathrm{X}]\leq \gamma {\bf VaR}_\alpha[\mathrm{Y}]$ for each $\alpha\in (0,1]$.
\end{lem}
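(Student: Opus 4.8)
The plan is to handle the two inequalities separately, both times exploiting the single structural fact that $\mathrm{X}$ and $\mathrm{Y}$ share the same probability masses $p_i:={\rm Pr}[\mathrm{X}=a_i]={\rm Pr}[\mathrm{Y}=b_i]$, so that the pointwise scaling $a_i\le\gamma b_i$ transfers through the respective characterizations of the two risk measures. Note that since both $\mathrm{X}$ and $\mathrm{Y}$ take nonnegative values, the linear programming characterization~(\ref{cvarmod1}) is available for each of them.

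For the CVaR inequality I would use~(\ref{cvarmod1}). The crucial observation is that the feasible region of~(\ref{cvarmod1}) depends only on the masses $p_i$ and on $\alpha$, and not on the values the variable attains. Hence ${\bf CVaR}_\alpha[\mathrm{X}]$ and ${\bf CVaR}_\alpha[\mathrm{Y}]$ are maxima of a linear objective over one and the same polytope $P=\{r:\sum_{k\in[K]} r_k=1,\ 0\le r_k\le p_k/(1-\alpha)\}$, the only difference being the objective coefficients $a_k$ versus $b_k$. Letting $r^*\in P$ be optimal for $\mathrm{X}$, and using $a_k\le\gamma b_k$ together with $r^*_k\ge 0$,
$${\bf CVaR}_\alpha[\mathrm{X}]=\sum_{k\in[K]} a_k r^*_k \le \gamma\sum_{k\in[K]} b_k r^*_k \le \gamma\,{\bf CVaR}_\alpha[\mathrm{Y}],$$
where the last step holds because $r^*$ is feasible, hence suboptimal, for the maximization defining ${\bf CVaR}_\alpha[\mathrm{Y}]$.

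For the VaR inequality I would argue directly from the quantile definition. Set $v={\bf VaR}_\alpha[\mathrm{Y}]$, so that ${\rm Pr}[\mathrm{Y}\le v]\ge\alpha$. For every index $i$ with $b_i\le v$ we have $a_i\le\gamma b_i\le\gamma v$, which yields the inclusion $\{i:b_i\le v\}\subseteq\{i:a_i\le\gamma v\}$. Summing the shared masses over these index sets gives
$${\rm Pr}[\mathrm{X}\le\gamma v]=\sum_{i:\,a_i\le\gamma v} p_i \ge \sum_{i:\,b_i\le v} p_i={\rm Pr}[\mathrm{Y}\le v]\ge\alpha.$$
Thus $\gamma v\in\{t:{\rm Pr}[\mathrm{X}\le t]\ge\alpha\}$, and taking the infimum yields ${\bf VaR}_\alpha[\mathrm{X}]\le\gamma v=\gamma\,{\bf VaR}_\alpha[\mathrm{Y}]$.

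I do not expect a genuine obstacle: the entire argument rests on the observation that the CVaR polytope and the VaR threshold condition are determined by the probabilities alone. The only points requiring minor care are the degenerate case $\gamma=0$, which forces $a_i=0$ and makes both inequalities trivial by nonnegativity, and respecting the stated ranges of $\alpha$, namely $\alpha\in[0,1)$ for CVaR so that $1-\alpha>0$ keeps $P$ well defined, and $\alpha\in(0,1]$ for VaR (the endpoint $\alpha=1$ being covered since then $v=\max_k b_k$ and $\max_k a_k\le\gamma v$).
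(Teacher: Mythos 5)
Your proof is correct. The CVaR half is essentially the paper's own argument: both of you use the LP characterization~(\ref{cvarmod1}), observe that its feasible set depends only on the shared masses and $\alpha$, take an optimizer $r^*$ for $\mathrm{X}$, and bound $\sum_k a_k r^*_k \le \gamma \sum_k b_k r^*_k \le \gamma\,{\bf CVaR}_\alpha[\mathrm{Y}]$. The VaR half, however, is genuinely different. The paper stays inside the MIP characterization~(\ref{exVC1})a: it takes an optimal pair $(\theta^*,\beta^*)$ for $\mathrm{Y}$, multiplies the constraints by $\gamma$, and checks that $(\gamma\theta^*,\beta^*)$ is feasible for the MIP of $\mathrm{X}$ with the enlarged big-M constant $M'=\gamma M$, so ${\bf VaR}_\alpha[\mathrm{X}]\le\gamma\theta^*$. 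You instead argue directly from the quantile definition: with $v={\bf VaR}_\alpha[\mathrm{Y}]$, the index-set inclusion $\{i: b_i\le v\}\subseteq\{i: a_i\le\gamma v\}$ and the shared masses give ${\rm Pr}[\mathrm{X}\le\gamma v]\ge{\rm Pr}[\mathrm{Y}\le v]\ge\alpha$, hence ${\bf VaR}_\alpha[\mathrm{X}]\le\gamma v$. Your route is more elementary and more robust: it avoids the big-M bookkeeping, does not lean on the (uncited, within the proof) correctness of the MIP formulation of VaR, and would extend verbatim beyond finitely supported variables coupled index-by-index. Its one implicit step is that the infimum defining $v$ is attained, i.e.\ ${\rm Pr}[\mathrm{Y}\le v]\ge\alpha$; this holds by right-continuity of the distribution function (immediate for a finitely supported variable), and is worth a one-line remark. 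The paper's route, in exchange, keeps the whole lemma inside the optimization formulations~(\ref{exVC1}) and~(\ref{cvarmod1}) that the rest of the paper manipulates, so both halves of the proof exercise the same machinery.
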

\begin{proof}
Let us compute ${\bf CVaR}_\alpha[\mathrm{X}]$ by using~(\ref{cvarmod1}) and denote by $r^*_k$, $k\in [K]$, the optimal values in~(\ref{cvarmod1}). Then ${\bf CVaR}_\alpha[\mathrm{X}]=\sum_{k\in [K]} r^*_k a_k\leq \gamma \sum_{k\in [K]} r^*_k b_k\leq \gamma {\bf CVaR}_\alpha[\mathrm{Y}]$.
Let us compute ${\bf VaR}_{\alpha}[\mathrm{Y}]$ by solving the problem~(\ref{exVC1})a.
 Let $\theta^*$, $\beta^*_k$, $k\in [K]$,  be an optimal solution to~(\ref{exVC1})a. Since $\gamma\geq 0$, the constraint
$\gamma b_k -\gamma \theta^* \leq \gamma M \beta^*_k$ holds
for each $k\in [K]$. By $a_k\leq \gamma b_k$ for each $k\in [K]$, we get 
$a_k-\gamma \theta^* \leq M' \beta^*_k,$
where $M'=\gamma M \geq \max\{a_1,\dots, a_K\}$, $k\in [K]$. In consequence, 
\begin{equation}
\label{exVC12}
	\begin{array}{llll}
		 & \displaystyle a_k-\gamma \theta^*\leq M' \beta^*_k & k\in [K]\\
		 &\displaystyle  \sum_{k\in [K]} {\rm Pr}[\mathrm{X}=a_k] \cdot  \beta^*_k \leq 1-\alpha\\
	\end{array}
\end{equation}
and ${\bf VaR}_{\alpha}[\mathrm{X}]\leq \gamma \theta^*=\gamma{\bf VaR}_{\alpha}[\mathrm{Y}]$.
\end{proof}

\section{Problem formulations}
\label{sec2}

We are given a set $J$ of $n$ jobs, which can be partially ordered by some precedence constraints. Namely, $i\rightarrow j$ means that job $j$ cannot start before job $i$ is completed.  For each job $j\in J$ a nonnegative processing time $p_j$, a nonnegative due date $d_j$ and a nonnegative weight $w_j$ can be specified.  A schedule $\pi$ is a feasible (i.e. preserving the precedence constraints) permutation of the jobs and $\Pi$ is the set of all feasible schedules. We will use $C_j(\pi)$ to denote the completion time of job $j$ in schedule $\pi$. 
Obeying the standard notation, we will use $T_j(\pi)=[C_j(\pi)-d_j]^+$ to define the tardiness of $j$ in $\pi$, and $U_j(\pi)=1$ if $C_j(\pi)>d_j$ (job $j$ is late in $\pi$) and $U_j(\pi)=0$ (job $j$ is on-time in $\pi$), otherwise. In the deterministic case we seek a schedule $\pi\in \Pi$ that minimizes a given  cost function~$f(\pi)$. 
The basic cost functions are the \emph{total flow time}~$\sum_{j\in J} C_j(\pi)$, the 
\emph{total tardiness}~$\sum_{j\in J} T_j(\pi)$, the \emph{maximum tardiness}~$\max_{j\in J} T_j(\pi)$ and the 
\emph{total number of late jobs}~ $\sum_{j\in J} U_j(\pi)$. We can also consider the weighted versions of these functions.  Scheduling problems $\mathcal{P}$ will be denoted by means of  the standard Graham's notation (see, e.g.,~\cite{B07}).

In this paper we assume that job processing times and due dates can be uncertain. 
The uncertainty is modeled by
a discrete \emph{scenario set} $\mathcal{U}=\{\xi_1,\xi_2,\dots,\xi_K\}$.  Each realization of the parameters $\xi \in \mathcal{U}$ is called a \emph{scenario}. For each scenario $\xi\in \mathcal{U}$ a probability ${\rm Pr}[\xi]$ of its occurrence is known (without loss of generality we can assume~ ${\rm Pr}[\xi]>0$).
We will use $p_j(\xi)$ and $d_j(\xi)$  to denote  the processing time and due date of job~$j$ 
under scenario $\xi\in \mathcal{U}$, respectively. We will  denote by $C_j(\pi, \xi)$, $T_j(\pi, \xi)$ and $U_j(\pi, \xi)$ the completion time, tardiness and unit penalty of job $\pi$, respectively, under scenario $\xi\in \mathcal{U}$.  Also, $f(\pi, \xi)$ stands for the cost of schedule $\pi$ under scenario $\xi\in \mathcal{U}$.
 Given a feasible schedule $\pi \in \Pi$, we denote by ${\rm F}(\pi)$ a random cost of $\pi$. Notice that ${\rm F}(\pi)$ is a discrete random variable with the probability distribution induced by the probability distribution in~$\mathcal{U}$.
   
For a fixed value of $\alpha$, we can compute a performance measure of $\pi$, namely the expected cost $\textbf{E}[{\rm F}(\pi)]$, the maximum cost $\textbf{Max}[{\rm F}(\pi)]$, the value at risk $\textbf{VaR}_\alpha[{\rm F}(\pi)]$ and the conditional value at risk $\textbf{CVaR}_\alpha[{\rm F}(\pi)]$. A sample problem $1||\sum C_j$ with 4 jobs and 5 processing time scenarios is shown in Figure~\ref{fig1}. Let $\pi=(1,2,3,4)$. It is easily seen that $\textbf{E}[{\rm F}(\pi)]=26$, $\textbf{VaR}_{0.5}[{\rm F}(\pi)]=29$,  $\textbf{CVaR}_{0.5}[{\rm F}(\pi)]=34$ and $\textbf{Max}[{\rm F}(\pi)]=36$.

 \begin{figure}[ht]
	\centering
	\includegraphics[width=12cm]{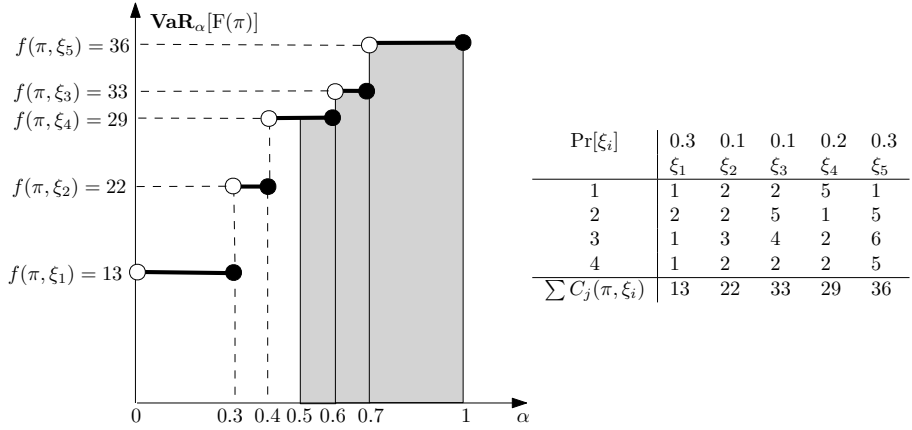}
	\caption{A sample scheduling problem $1||\sum C_j$ with 5 processing time scenarios.} \label{fig1}
\end{figure}
 
 In this paper we will study the problems $\textsc{Min-VaR}_\alpha~\mathcal{P}$, $\textsc{Min-CVaR}_\alpha~\mathcal{P}$, $\textsc{Min-Exp}$~$\mathcal{P}$, and  $\textsc{Min-Max}~\mathcal{P}$, in which we minimize the corresponding performance measure for a fixed $\alpha$ and a specific
 single machine scheduling problem~$\mathcal{P}$, under a given scenario set $\mathcal{U}$. Notice that the robust $\textsc{Min-Max}~\mathcal{P}$ problem is a special case of both  $\textsc{Min-VaR}_\alpha~\mathcal{P}$ and $\textsc{Min-CVaR}_\alpha~\mathcal{P}$. Also, $\textsc{Min-Exp}$~$\mathcal{P}$ is a special case of $\textsc{Min-CVaR}_\alpha~\mathcal{P}$.

In the next sections we provide a number of new positive and negative complexity and approximation results for basic single machine scheduling problems $\mathcal{P}$.
 Tables~\ref{tabs1}-\ref{tabs3} summarize 
 the known and new results. 
In Table~\ref{tabs1}, the negative results for uncertain due dates and deterministic processing times are shown. In Table~\ref{tabs2}, the negative results for uncertain processing times and deterministic due dates are presented. Finally, in Table~\ref{tabs3}, 
some positive results are shown.
 \begin{table}[ht]
\centering
\footnotesize
\caption{Complexity results for uncertain due dates (processing times are deterministic).} \label{tabs1}
	\begin{tabular}{l|llllllllll}
	$\mathcal{P}$	& $\textsc{Min-Exp}~\mathcal{P}$ & $\textsc{Min-VaR}_\alpha~\mathcal{P}$ & $\textsc{Min-CVaR}_\alpha~\mathcal{P}$ & $\textsc{Min-Max}~\mathcal{P}$ \\ \hline
$1|p_j=1|T_{\max}$ & str. NP-hard &  str. NP-hard  & str. NP-hard & poly. sol.~\cite{KZ16d}\\
		               & not appr. within & not at all appr. & not appr. within\\
		               & $\frac{7}{6}-\epsilon$, $\epsilon>0$~\cite{KZ16d}	 & for any $\alpha\in (0,1)$ & $\frac{7}{6}-\epsilon$, $\epsilon>0$\\
		               & 							 &					  & for any $\alpha\in (0,1)$ \\ \hline
		      
$1|p_j=1|\sum U_j$  & poly sol. & str. NP-hard & str. NP-hard & str. NP-hard \\
		       &	(assignment)	     & not at all appr. & for any $\alpha\in (0,1)$ & not appr. for any \\
		       &		     & for any $\alpha\in (0,1)$ & & constant $\gamma>1$ \\ \hline
$1||\sum U_j$ & NP-hard & as above & as above & as above \\ \hline		      		       
$1|p_j=1|\sum T_j$  & poly sol. & str. NP-hard & str. NP-hard & str. NP-hard \\
			& (assignment)		& not at all appr. & for any $\alpha\in (0,1)$		& not appr. within \\
			&		& for any $\alpha\in (0,1)$  & & $\frac{5}{4}-\epsilon$, $\epsilon>0$ \\ \hline
$1||\sum T_j$ & str. NP-hard & as above & as above & as above \\
\hline
	\end{tabular}
\end{table}
 \begin{table}[ht]
\centering
\footnotesize
\caption{Complexity results for uncertain processing times (the due dates are deterministic).} \label{tabs2}
	\begin{tabular}{l|llllllllll}
$\mathcal{P}$		& $\textsc{Min-Exp}~\mathcal{P}$ & $\textsc{Min-VaR}_\alpha~\mathcal{P}$ & $\textsc{Min-CVaR}_\alpha~\mathcal{P}$ & $\textsc{Min-Max}~\mathcal{P}$ \\ \hline
$1||\sum C_j$ &    poly sol. &  str. NP-hard  & str. NP-hard & str. NP-hard\\
		      &  	      & not appr. within & for any $\alpha\in (0,1)$ & not appr. within \\
		      &		      & $\frac{6}{5}-\epsilon$, $\epsilon>0$ & & $\frac{6}{5}-\epsilon$, $\epsilon>0$~\cite{KY97, MNO13}\\ \hline
$1||\sum U_j$  & open             & str. NP-hard & str. NP-hard & str. NP-hard \\
		       & 		     & for any $\alpha\in (0,1)$  &  for any $\alpha\in (0,1)$ & \\ \hline
$1||\sum T_j$  & NP-hard~\cite{LA77} & str. NP-hard & str. NP-hard & str. NP-hard \\
			&		& not  appr. within &  for any $\alpha\in [0,1)$ & not appr. within  \\
			&		&$\frac{6}{5}-\epsilon$, $\epsilon>0$  &  & $\frac{6}{5}-\epsilon$, $\epsilon>0$\\
			\hline
	\end{tabular}
\begin{flushleft}	
\end{flushleft}	
\end{table}
\begin{table}[ht]
\centering
\footnotesize
	\caption{Positive complexity results.} \label{tabs3}
	\begin{tabular}{l|llllllllll}
$\mathcal{P}$		& $\textsc{Min-Exp}~\mathcal{P}$ & $\textsc{Min-VaR}_\alpha~\mathcal{P}$ & $\textsc{Min-CVaR}_\alpha~\mathcal{P}$ & $\textsc{Min-Max}~\mathcal{P}$ \\ \hline
$1|prec|\max w_j T_j$ & $O(f_{\max}^K Kn^2)$ &  $O(f_{\max}^K Kn^2)$ &  $O(f_{\max}^K Kn^2)$ & 
 $O(Kn^2)$~\cite{KZ16d}\\ 
                                    & FPTAS &  FPTAS&  FPTAS&\\ 
                                    &for const. $K$ &for const. $K$ &for const. $K$ &\\\hline
$1|prec|\sum w_j C_j$ &  as the determ. &  appr. within 2 &appr. within 2 & appr. within 2~\cite{MNO13}\\
		      &	 problem & for const. $K$     & \\ \hline
$1|prec^*|\sum w_j C_j$ &  poly sol. &  appr. within 2 &appr. within  & appr. within 2~\cite{MNO13}\\ 
					&	      &  for const. $K$         & $\min\{\frac{1}{1-\alpha},2\}$ \\ \hline
$1|p_j=1|\sum w_j U_j$  &   poly sol. & - & appr. within & appr. within $K$ \\
		       &		     &   &  $\min\{\frac{1}{{\rm Pr}_{\min}},\frac{1}{1-\alpha}\}$ &  \\ \hline
$1||\sum w_j U_j$  & appr. within            & - & appr. within & appr. within  \\
determ. proc. times		       & $4+\epsilon$, $\epsilon>0$ &   &  $\min\{\frac{4+\epsilon}{{\rm Pr}_{\min}},\frac{4+\epsilon}{1-\alpha}\}$ & $(4+\epsilon)K$, $\epsilon>0$ 
		        \\ \hline		       
$1|p_j=1|\sum w_j T_j$  & poly sol. & - &   appr. within & appr. within $K$ \\
			&		&  &  $\min\{\frac{1}{{\rm Pr}_{\min}},\frac{1}{1-\alpha}\}$ & \\ \hline
$1||\sum w_j T_j$  & appr. within            & - & appr. within & appr. within \\
determ. proc. times		       & $4+\epsilon$, $\epsilon>0$ &   &  $\min\{\frac{4+\epsilon}{{\rm Pr}_{\min}},\frac{4+\epsilon}{1-\alpha}\}$ & $(4+\epsilon)K$, $\epsilon>0$ 
		         \\ \hline				
	\end{tabular}
\begin{flushleft}	
	{\footnotesize $f_{\max}$ is an upper bound
	on the cost of any schedule under any scenario; $prec^*$ is a polynomially solvable structure of the precedence constraints; ${\rm Pr}_{\min}=\min_{k\in [K]} {\rm Pr}[\xi_k]$.}
\end{flushleft}	
\end{table}

\section{Some general properties}
 \label{sec3}
 
 In this section we will show some general relationships between the problems with various performance criteria. These properties will be used later to establish some positive and negative complexity results for particular problems.
 
\begin{thm}
	The following statements are true:
	\begin{enumerate}
	\item If $\textsc{Min-Exp}~\mathcal{P}$ is approximable within $\sigma>1$
	(for $\sigma=1$ it is
	 polynomially solvable), then $\textsc{Min-CVaR}_\alpha~\mathcal{P}$ is approximable within $\sigma\rho$,
	 where $\rho=
	 \min\{\frac{1}{{\rm Pr}_{\min}},\frac{1}{1-\alpha}\}$, for each constant $\alpha\in [0,1)$.
	\item If $\textsc{Min-Exp}~\mathcal{P}$ with $K$-scenarios is NP-hard and hard to approximate within $\rho>1$, then $\textsc{Min-CVaR}_\alpha~\mathcal{P}$ with $K+1$ scenarios is also NP-hard and hard to approximate within $\rho$ for each constant $\alpha\in [0,1)$.
	\end{enumerate}
	\label{thmexp}
\end{thm}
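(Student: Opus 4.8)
The first statement follows by sandwiching $\textbf{CVaR}_\alpha$ between two copies of the expectation using Lemma~\ref{lem01}. The plan is as follows. Let $\pi'$ be the schedule returned by a $\sigma$-approximation algorithm for $\textsc{Min-Exp}~\mathcal{P}$, let $\pi^*_{\rm E}$ be an expectation-optimal schedule and $\pi^*_{\rm C}$ a $\textbf{CVaR}_\alpha$-optimal one. Applying the upper bound of Lemma~\ref{lem01} to ${\rm F}(\pi')$ gives $\textbf{CVaR}_\alpha[{\rm F}(\pi')]\le \rho\,\textbf{E}[{\rm F}(\pi')]$; the approximation guarantee gives $\textbf{E}[{\rm F}(\pi')]\le \sigma\,\textbf{E}[{\rm F}(\pi^*_{\rm E})]$; optimality of $\pi^*_{\rm E}$ for the expectation gives $\textbf{E}[{\rm F}(\pi^*_{\rm E})]\le \textbf{E}[{\rm F}(\pi^*_{\rm C})]$; and the lower bound of Lemma~\ref{lem01} gives $\textbf{E}[{\rm F}(\pi^*_{\rm C})]\le \textbf{CVaR}_\alpha[{\rm F}(\pi^*_{\rm C})]$. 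Concatenating these four inequalities yields $\textbf{CVaR}_\alpha[{\rm F}(\pi')]\le \sigma\rho\,\textbf{CVaR}_\alpha[{\rm F}(\pi^*_{\rm C})]$, so $\pi'$ is a $\sigma\rho$-approximate schedule; for $\sigma=1$ the algorithm returns $\pi^*_{\rm E}$ and the ratio drops to $\rho$.

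For the second statement I would give a cost-preserving reduction that turns $\textbf{CVaR}_\alpha$ into the expectation. Starting from a $K$-scenario instance of $\textsc{Min-Exp}~\mathcal{P}$ with scenario probabilities $p_1,\dots,p_K$, I build a $(K+1)$-scenario instance of $\textsc{Min-CVaR}_\alpha~\mathcal{P}$ on the same job set, precedence constraints and weights: I keep the scenarios $\xi_1,\dots,\xi_K$ but rescale ${\rm Pr}[\xi_i]:=(1-\alpha)p_i$, and I append one scenario $\xi_{K+1}$ with ${\rm Pr}[\xi_{K+1}]:=\alpha$ whose parameters are set so that $f(\pi,\xi_{K+1})=0$ for every schedule $\pi$. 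Such a dominated scenario is realizable for every cost function considered here, e.g. by setting the due dates of $\xi_{K+1}$ above the makespan (for the tardiness and late-job objectives) or all its processing times to zero. The new probabilities are positive for $\alpha\in(0,1)$ and sum to $(1-\alpha)\sum_i p_i+\alpha=1$; the degenerate case $\alpha=0$ needs no reduction because $\textsc{Min-CVaR}_0~\mathcal{P}$ coincides with $\textsc{Min-Exp}~\mathcal{P}$.

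The crux is then to verify that $\textbf{CVaR}_\alpha[{\rm F}(\pi)]$ on the new instance equals $\textbf{E}[{\rm F}(\pi)]$ on the original one, schedule by schedule. I would evaluate $\textbf{CVaR}_\alpha$ through the greedy program~(\ref{cvarmod2}), which spreads the budget $1-\alpha$ over the largest costs: since $f(\pi,\xi_{K+1})=0$ is the smallest value and the rescaled original scenarios already carry total probability $\sum_i(1-\alpha)p_i=1-\alpha$, the budget is absorbed entirely by $\xi_1,\dots,\xi_K$ at full capacity $q_i=(1-\alpha)p_i$ while $q_{K+1}=0$ (ties among zero-cost scenarios do not change the value). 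Hence $\textbf{CVaR}_\alpha[{\rm F}(\pi)]=\frac{1}{1-\alpha}\sum_{i=1}^K(1-\alpha)p_i\,f(\pi,\xi_i)=\textbf{E}[{\rm F}(\pi)]$, so the two objectives agree for every $\pi$ and therefore have identical optima and identical per-schedule values. Because the reduction runs in polynomial time for fixed rational $\alpha$ and preserves the objective value exactly, a $\rho$-approximation (or an exact polynomial algorithm) for $\textsc{Min-CVaR}_\alpha~\mathcal{P}$ would yield the same for $\textsc{Min-Exp}~\mathcal{P}$, transferring both the NP-hardness and the inapproximability threshold $\rho$. The main obstacle is precisely this collapse step: choosing the scaling $(1-\alpha)p_i$ together with a genuinely dominated added scenario so that the greedy leaves $\xi_{K+1}$ with zero weight, which is what forces $\textbf{CVaR}_\alpha$ to reproduce the original expectation exactly.
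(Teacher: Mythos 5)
Your proposal is correct and follows essentially the same route as the paper: assertion~1 via the identical chain of four inequalities built on Lemma~\ref{lem01}, and assertion~2 via the same cost-preserving reduction that rescales the original probabilities by $1-\alpha$ and appends a zero-cost scenario with probability $\alpha$ (the paper states the resulting equality ${\bf CVaR}_\alpha[{\rm F}'(\pi)]={\bf E}[{\rm F}(\pi)]$ directly, while you verify it through the greedy solution of~(\ref{cvarmod2}), which is the same computation). No gaps.
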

\begin{proof}
	We first prove assertion~1.
	Let $\pi^*$ minimize the expected cost and $\pi'$  minimize the conditional value at risk for a fixed $\alpha \in [0,1)$. 
	We will denote by $\hat{\pi}$ a $ \sigma$-approximation schedule for $\textsc{Min-Exp}~\mathcal{P}$.
	Using Lemma~\ref{lem01} we get
	$${\bf CVaR}_\alpha [{\rm F}(\hat{\pi})]\leq 
	\rho  {\bf E}[{\rm F}(\hat{\pi})]\leq \sigma\rho  {\bf E}[{\rm F}(\pi^*)]
	 \leq   \sigma\rho {\bf E}[{\rm F}(\pi')]\leq  \sigma\rho {\bf CVaR}_\alpha [{\rm F}(\pi')],$$
	and the  assertion follows.
	
	\begin{figure}[ht]
	\centering
	\includegraphics[width=6cm]{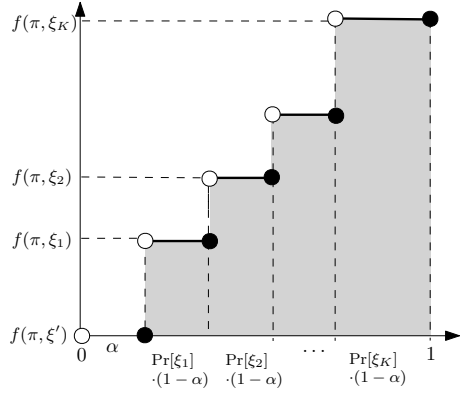}
	\caption{Illustration of the proof of Theorem~\ref{thmexp}.} \label{fig2a}
\end{figure}
	
	In order to prove assertion~2, consider an instance of $\textsc{Min-Exp}~\mathcal{P}$ with $\mathcal{U}=\{\xi_1,\dots,\xi_K\}$.
	 Fix $\alpha\in (0,1)$ (the statement trivially holds for $\alpha=0$) and add one additional scenario~$\xi'$ under which the cost of each schedule is~0 (for example, all job processing times are~0 under~$\xi'$). We fix  ${\rm Pr}'[\xi']=\alpha$ and ${\rm Pr}'[\xi_i]={\rm Pr}[\xi_i]\cdot(1-\alpha)$ for each $i\in [K]$. Denote by ${\rm F}'(\pi)$ the random cost of $\pi$ under the new scenario set~$\mathcal{U}'$.
	 For each schedule $\pi$ we get  (see Figure~\ref{fig2a}): 
	 $$ {\bf CVaR}_{\alpha}[{\rm F}'(\pi)]=\frac{1}{1-\alpha}\sum_{i\in [K]} {\rm Pr}'[\xi_i] f(\pi, \xi_i)=\sum_{i\in [K]}{\rm Pr}[\xi_i] f(\pi, \xi_i)={\bf E}[{\rm F}(\pi)].$$
	 Hence there is a cost preserving reduction from $\textsc{Min-Exp}~\mathcal{P}$ with $K$ scenarios to $\textsc{Min-CVaR}_\alpha~\mathcal{P}$ with $K+1$ scenarios and the theorem follows.
\end{proof}
 \begin{thm}
\label{thm01}
	Assume that $w_j=1$ for each job $j\in J$ in problem $\mathcal{P}$.
	If \textsc{Min-Max}~$\mathcal{P}$ with $K\geq 2$ scenarios is NP-hard and hard to approximate within $\rho>1$, then
	\begin{enumerate}	
	\item $\textsc{Min-VaR}_\alpha~\mathcal{P}$ with $K+1$ scenarios is NP-hard  and hard to approximate within 
	$\rho>1$ for each constant $\alpha\in (0,1)$.	
	\item $\textsc{Min-CVaR}_\alpha~\mathcal{P}$ with $K+1$ scenarios is NP-hard  for  each constant $\alpha\in (0,1)$.	 
	 \end{enumerate}
\end{thm}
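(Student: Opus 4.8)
The plan is to reduce the robust problem $\textsc{Min-Max}~\mathcal{P}$ with $K$ scenarios to each of the two risk problems with $K+1$ scenarios, by appending a single \emph{dummy} scenario $\xi'$ whose cost is the same for every feasible schedule. The assumption $w_j=1$ is exactly what makes such a scenario constructible: for each cost function considered one can fix the data under $\xi'$ so that $f(\pi,\xi')$ is a constant $H$ independent of $\pi$. For $T_{\max}$ and $\sum U_j$ it suffices to force every job to be late (e.g.\ all due dates equal to $0$), giving $T_{\max}=C_{\max}=\sum_j p_j$ and $\sum U_j=n$; for $\sum C_j$ and $\sum T_j$ one additionally makes all processing times under $\xi'$ equal, so that $\sum_j C_j$ is the same for every permutation. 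With arbitrary weights $w_j$ the quantity $\sum_j w_j C_j$ would depend on how the weights are matched to positions, so no constant-cost scenario would be available --- this is the reason for restricting to unit weights. In every case $H$ is taken at least as large as $\max_{\pi,k} f(\pi,\xi_k)$, which stays polynomially (indeed unary-) bounded, so strong hardness is preserved.

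For assertion~1 I keep $\xi_1,\dots,\xi_K$, spread a total mass $\alpha$ over them (say uniformly), and set ${\rm Pr}'[\xi']=1-\alpha$ on the dummy scenario of cost $H$. Writing ${\rm F}'(\pi)$ for the induced random cost and $M(\pi)={\bf Max}[{\rm F}(\pi)]=\max_k f(\pi,\xi_k)$, the original costs carry exactly mass $\alpha$ and all lie at or below $H$, so the $\alpha$-quantile is attained precisely at $M(\pi)$: the cumulative probability first reaches $\alpha$ at $t=M(\pi)$ and stays strictly below $\alpha$ for $t<M(\pi)$, whether $H>M(\pi)$ or $H=M(\pi)$. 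Hence ${\bf VaR}_\alpha[{\rm F}'(\pi)]=M(\pi)={\bf Max}[{\rm F}(\pi)]$ for every $\pi$. This is a cost-preserving reduction, so it transfers both NP-hardness and inapproximability within $\rho$ from $\textsc{Min-Max}~\mathcal{P}$ to $\textsc{Min-VaR}_\alpha~\mathcal{P}$.

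For assertion~2 the same dummy scenario is used, but its mass is lowered slightly to ${\rm Pr}'[\xi']=1-\alpha-\delta$, the remaining mass $\alpha+\delta$ being spread uniformly over $\xi_1,\dots,\xi_K$. By the greedy solution of~(\ref{cvarmod2}), ${\bf CVaR}_\alpha$ averages the top $1-\alpha$ of the mass: it first absorbs the whole dummy scenario (mass $1-\alpha-\delta$ at cost $H$) and then a further $\delta$ taken from the largest original values, i.e.\ from $M(\pi)$, provided $\delta$ does not exceed the mass of any single original scenario. Taking $\delta>0$ sufficiently small (e.g.\ $\delta<\min\{\alpha/K,\,1-\alpha\}$) guarantees $\delta\le\min_k{\rm Pr}'[\xi_k]$ and $1-\alpha-\delta>0$, whence ${\bf CVaR}_\alpha[{\rm F}'(\pi)]=\tfrac{1}{1-\alpha}\bigl(H(1-\alpha-\delta)+\delta\,M(\pi)\bigr)$. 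This is a strictly increasing affine function of $M(\pi)={\bf Max}[{\rm F}(\pi)]$, so a schedule minimizes ${\bf CVaR}_\alpha[{\rm F}'(\pi)]$ exactly when it minimizes ${\bf Max}[{\rm F}(\pi)]$, and the two decision versions coincide (threshold $B$ for Max maps to $\tfrac{1}{1-\alpha}(H(1-\alpha-\delta)+\delta B)$ for CVaR). This yields NP-hardness of $\textsc{Min-CVaR}_\alpha~\mathcal{P}$; the condition $K\ge 2$ is inherited from the hypothesis, ensuring the source is a genuine multi-scenario problem. Note that the additive term $H(1-\alpha-\delta)/(1-\alpha)$ destroys multiplicative gaps, which is precisely why only hardness, and not inapproximability within $\rho$, is claimed here.

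The main obstacle I expect is not the quantile/averaging bookkeeping but the construction of the constant-cost scenario for each specific cost function, together with the boundary cases: verifying that ${\bf VaR}_\alpha$ and ${\bf CVaR}_\alpha$ behave as claimed when $H$ cannot be made strictly larger than every original cost (as for $\sum U_j$ and $T_{\max}$, where the cost is capped at $n$), so that ties at the top value must be handled. In those tie cases one checks directly that the $\delta$-sliver is still taken at value $M(\pi)$ and that the affine formula remains valid (it evaluates to $H$ when $M(\pi)=H$), preserving strict monotonicity in $M(\pi)$ throughout $[0,H]$.
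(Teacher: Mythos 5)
Your proof follows essentially the same route as the paper's: append one dummy scenario $\xi'$ whose cost is a schedule-independent constant dominating all original costs (the paper does this uniformly by setting $p_j(\xi')=p_{\max}$ and $d_j(\xi')=d_{\min}$, you do it per cost function --- both work, and both need $w_j=1$), give it mass $1-\alpha$ so that ${\bf VaR}_\alpha$ coincides with ${\bf Max}$ (a cost-preserving reduction, hence inapproximability transfers), and choose the CVaR probabilities so that ${\bf CVaR}_\alpha[{\rm F}'(\pi)]$ is an increasing affine function of ${\bf Max}[{\rm F}(\pi)]$ (hence only hardness transfers). The one genuine difference is the probability split in assertion~2, and it matters. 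The paper insists that the top $1-\alpha$ of probability mass consist of the dummy scenario plus one whole original scenario, which forces ${\rm Pr}[\xi_i]=\beta=\alpha/(K-1)$ and ${\rm Pr}[\xi']=\gamma=1-K\alpha/(K-1)$; this is a valid distribution only when $\gamma\geq 0$, i.e. $\alpha\leq (K-1)/K$, so as written the paper's argument does not cover constants $\alpha$ close to $1$ (e.g. $K=2$ and $\alpha>1/2$; the paper's remark that ``$\beta>1$'' is evidently a slip). Your split --- dummy mass $1-\alpha-\delta$ and uniform mass $(\alpha+\delta)/K$ on the originals, with $\delta>0$ small enough that $\delta\leq(\alpha+\delta)/K$ and $\delta<1-\alpha$ --- makes the sliver taken at the maximum value arbitrarily small rather than a whole scenario, is valid for every $\alpha\in(0,1)$, and you also verify the tie case $H=M(\pi)$ explicitly. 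So your parameterization is strictly more robust and in fact repairs a boundary defect in the published proof; the only thing the paper's choice buys is that $\beta$ and $\gamma$ are determined by $K$ and $\alpha$ alone, with no auxiliary parameter $\delta$ to carry through the argument.
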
\begin{proof}
	Choose an instance of the $\textsc{Min-Max}~\mathcal{P}$ problem with $\mathcal{U}=\{\xi_1,\dots, \xi_K\}$, $K\geq 2$. Fix $\alpha\in (0,1)$ and create $\mathcal{U}'$ by adding to $\mathcal{U}$  a dummy scenario $\xi'$ such that  the cost of each schedule under $\xi'$ equals $M$ and $M\geq f(\pi, \xi_i)$ for each $i\in [K]$ and each $\pi \in \Pi$.  It is enough to fix $p_j(\xi')=p_{\max}$ and $d_j(\xi')=d_{\min}$ for each job $j\in J$, where $p_{\max}=\max_{j\in J, i\in [K]} p_j(\xi_i)$ is the maximum job processing time and $d_{\min}=\min_{j\in J, i\in [K]} d_j(\xi_i)$ is the minimum due date over all scenarios. 
	 For each of the two assertions, we define an appropriate probability distribution in $\mathcal{U}'$. We will use ${\rm F}'(\pi)$ to denote the random cost of $\pi$ under $\mathcal{U}'$.
	 
	 In order to prove the statement~1, we fix ${\rm Pr}[\xi']=1-\alpha$ and ${\rm Pr}[\xi_i]=\frac{\alpha}{K}$ for each $i\in [K]$ (see Figure~\ref{fig2}a). The equality
	 ${\bf VaR}_\alpha[{\rm F}'(\pi)]={\bf Max}[{\rm F}(\pi)]$ holds.
	Hence, there is a cost preserving reduction from $\textsc{Min-Max}~\mathcal{P}$ with $K$ scenarios to $\textsc{Min-VaR}_\alpha~\mathcal{P}$ with $K+1$ scenarios and the statement follows.	
	To prove the statement~2,  we fix ${\rm Pr}[\xi']=\gamma$ and ${\rm Pr}[\xi_i]=\beta$ for each $i\in [K]$, where $\gamma$ and $\beta$ satisfy the following system of equations (see Figure~\ref{fig2}b):
			
			$$ \left\{
			\begin{array}{ll}
				\beta+\gamma=1-\alpha \\
				K\beta+\gamma=1
			\end{array}\right.$$
	In consequence $\beta=\frac{\alpha}{K-1}$ and $\gamma=1-\frac{K\alpha}{K-1}$. Observe that $\beta>1$ as $\alpha\in (0,1)$.
	\begin{figure}[ht]
	\centering
	\includegraphics[width=14cm]{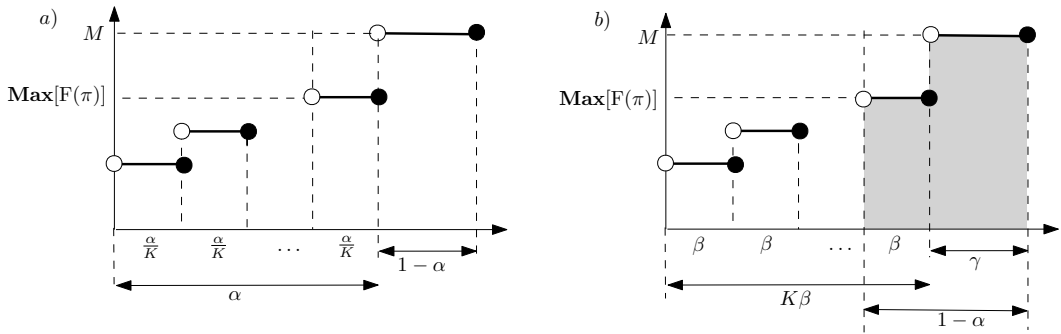}
	\caption{Illustration of the proof of Theorem~\ref{thm01}.} \label{fig2}
\end{figure}
	 For each schedule $\pi$ we get
	$${\bf CVaR}_{\alpha}[{\rm F}'(\pi)]=\frac{1}{1-\alpha}(\beta \cdot {\bf Max}[\mathrm{F}(\pi)] + \gamma M),$$
	where $\beta, \gamma$ and $M$ are numbers depending on $K$ and $\alpha$.
	Hence $\textsc{Min-Max}~\mathcal{P}$ and the corresponding instance of $\textsc{Min-CVaR}_\alpha~\mathcal{P}$ have the same optimal solutions and the theorem follows.
\end{proof}	

\section{Negative complexity results}
\label{sec4}

In this section we will prove some negative complexity results for basic single machine scheduling problems.  These results are summarized in Tables~\ref{tabs1} and~\ref{tabs2}. 

\subsection{Uncertain due dates}

We first address  the problem of minimizing the value at risk criterion. The following theorem characterizes the complexity of some basic problems:
\begin{thm} 
\label{thm1}
	For each  $\alpha\in (0,1)$, $\textsc{Min-Var}_\alpha~\mathcal{P}$ is strongly NP-hard and not at all approximable, when $\mathcal{P} \in \{1|p_j=1|T_{\max},\,
	1|p_j=1|\sum T_j, \,1|p_j=1| \sum U_j\}$.
\end{thm}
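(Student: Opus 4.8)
The plan is to handle all three cost functions with a single reduction, exploiting a feature special to unit processing times: if all due dates are integers, then the cost of a schedule under a scenario equals $0$ exactly when every job is on time, and is a positive integer (hence $\geq 1$) otherwise, and this holds simultaneously for $T_{\max}$, $\sum T_j$ and $\sum U_j$ (each is $0$ iff no job is late). Consequently, for a fixed schedule $\pi$ we have ${\bf VaR}_\alpha[{\rm F}(\pi)]=0$ iff ${\rm Pr}[{\rm F}(\pi)=0]\geq\alpha$, i.e. iff the scenarios under which $\pi$ meets \emph{all} due dates carry total probability at least $\alpha$; otherwise ${\bf VaR}_\alpha[{\rm F}(\pi)]\geq 1$. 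So it suffices to prove that deciding whether \emph{some} schedule attains ${\bf VaR}_\alpha=0$ is NP-hard: the resulting $0$-versus-$\geq 1$ gap gives strong NP-hardness and simultaneously rules out any multiplicative approximation (a finite-ratio algorithm would have to return $0$ exactly on the $0$-instances, hence decide the problem), and it does so for the three cost functions at once. Note also that, since $p_j=1$, every due date may be taken in $\{0,\dots,n\}$, so if the probabilities are kept with polynomially bounded denominators the instance has no large numbers and ordinary NP-hardness already yields \emph{strong} NP-hardness.

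I would emphasise that the reduction behind Theorem~\ref{thm01} does not suffice here: it only transfers the \textsc{Min-Max} approximation gap, but the $0$-cost case of \textsc{Min-Max} (a schedule on time under \emph{all} scenarios) is polynomially solvable for these problems — feasibility reduces to EDD on the pointwise-minimal due dates $\min_\xi d_j(\xi)$ — and for $T_{\max}$ the \textsc{Min-Max} problem is polynomial altogether. Thus no $0/1$ gap is available from that route, and a direct construction is required. I would reduce from a strongly NP-complete graph problem such as \textsc{Clique} (equivalently \textsc{Independent Set}), introducing one scenario $\xi_v$ per vertex $v$ and choosing the due dates so that a set $S$ of scenarios admits a common on-time schedule exactly when the corresponding vertices form a clique. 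Since admitting a common on-time schedule is downward closed in $S$ (enlarging $S$ only lowers the effective due dates $\min_{\xi\in S} d_j(\xi)$), the goal is to engineer the deadlines so that this downward-closed family coincides with the clique complex of the graph; with equal probabilities on the $\xi_v$ this makes ${\bf VaR}_\alpha=0$ equivalent to the existence of a clique of the prescribed size.

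To accommodate an arbitrary fixed $\alpha\in(0,1)$, I would calibrate the probability threshold with padding scenarios. An \emph{always-on-time} scenario (all due dates equal to $n$, on disjoint dummy jobs) lies in the feasible set of every schedule and so lowers the effective threshold, whereas an \emph{always-late} scenario (some job given due date $0$) is infeasible for every schedule and merely adds inert probability mass. Choosing the probabilities of the core and padding scenarios — all rationals with polynomially bounded denominators, and the padding placed on disjoint jobs so as not to interfere with the core Hall conditions — lets the requirement ``feasible probability $\geq\alpha$'' translate precisely into ``the core admits a clique of the required size'', for the prescribed value of $\alpha$.

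The main obstacle is the gadget itself. With unit processing times a due date only \emph{upper}-bounds a job's position, so a single scenario tends to couple all early positions globally: the binding feasibility condition for a set $S$ is the Hall inequality $|\{j:\min_{\xi\in S}d_j(\xi)\le t\}|\le t$ for every $t$. Forcing these inequalities to encode the edge relation — localising each conflict to the intended vertex pair and, crucially, ruling out spurious higher-order infeasibilities so that the commonly-feasible sets are \emph{exactly} the cliques — is the technical heart of the proof; the monotone, OR-like behaviour of the pointwise minimum of due dates is precisely what makes this delicate. Once such a deadline design is in place, the $0/1$ gap and the probability bookkeeping of the first two paragraphs finish the strong NP-hardness and total inapproximability for $1|p_j=1|T_{\max}$, $1|p_j=1|\sum T_j$ and $1|p_j=1|\sum U_j$ uniformly.
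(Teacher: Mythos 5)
Your high-level framing is exactly right and coincides with the paper's: for unit jobs with integer due dates all three costs are $0$ precisely when no job is late and are $\geq 1$ otherwise, so ${\bf VaR}_\alpha[{\rm F}(\pi)]=0$ iff the scenarios under which $\pi$ is fully on time carry probability at least $\alpha$, and NP-hardness of deciding this $0$-versus-$\geq 1$ question yields both strong NP-hardness (all numbers are polynomially bounded) and total inapproximability. Your padding devices (an always-on-time dummy scenario and an always-late dummy scenario with rational probabilities) are also precisely the two cases the paper uses to calibrate an arbitrary fixed $\alpha\in(0,1)$. However, there is a genuine gap: the entire combinatorial core of the reduction is missing. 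You reduce the problem to ``engineer due-date scenarios so that the commonly-feasible scenario sets are exactly the cliques of a given graph,'' declare this the technical heart, and then assume it is done (``once such a deadline design is in place\dots''). A proof cannot defer its only nontrivial step; as written, nothing establishes NP-hardness of the $0$-cost decision problem.

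For comparison, the paper closes exactly this hole with an explicit construction, reducing from \textsc{Min 3-Sat} rather than \textsc{Clique}: for each variable $x_j$ it creates two jobs $J_{x_j},J_{\overline{x}_j}$ intended to occupy the slot of positions $2j-1,2j$, and for each clause $\mathcal{C}_i$ a scenario $\xi_i$ in which a literal occurring in $\mathcal{C}_i$ forces the corresponding job to have due date $2j-1$ (its partner $2j$), all other pairs getting due date $2j$. Then a slot-respecting schedule is fully on time under $\xi_i$ iff every literal of $\mathcal{C}_i$ is false, so the number of zero-cost scenarios equals the number of unsatisfied clauses, and schedules violating the slot structure are late under every scenario; the question ``is there an assignment leaving at least $l$ clauses unsatisfied'' becomes ``is there a schedule with on-time probability at least $\alpha$'' after the padding step. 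Incidentally, your clique-complex plan can be completed by an analogous gadget (for each \emph{non-edge} $\{u,v\}$, a dedicated slot with two jobs whose due dates drop to the slot's first position under $\xi_u$ and $\xi_v$ respectively, so that exactly the pairs $\{\xi_u,\xi_v\}$ over non-edges are jointly infeasible and Hall's condition fails only then); this is essentially the same two-jobs-per-slot trick. But since your proposal neither gives this construction nor verifies the ``no spurious higher-order infeasibilities'' property it correctly identifies as crucial, it does not yet constitute a proof.
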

\begin{proof}
	Consider an instance of the following strongly NP-hard \textsc{Min 3-Sat} problem~\cite{KM94, AZ02}. 
	We are given boolean variables $x_1,\dots, x_n$, a collection of clauses $\mathcal{C}_1,\dots \mathcal{C}_m$, where each clause is a disjunction of at most $3$ literals (variables or their negations) and we ask if there is an assignment to the variables which satisfies at most $L < m$
	 clauses. We can ask equivalently, if there is an assignment to the variables for which at least $l=m-L$ clauses are not satisfied.
		
	Given an instance of \textsc{Min 3-Sat}, we create two jobs $J_{x_i}$ and $J_{\overline{x}_i}$ for each variable $x_i$, 
	$i\in [n]$.
	A \emph{due date scenario}~$\xi_i$ corresponds to clause $\mathcal{C}_i=(l_1 \vee l_2 \vee l_3)$ and is formed as follows. For each $q=1,2,3$, if $l_q=x_j$, then the due date of $J_{x_j}$ is $2j-1$ and the due date of $J_{\overline{x}_j}$ is $2j$; if $l_q=\overline{x}_j$, then the due date of $J_{x_j}$ is $2j$ and the due date of $J_{\overline{x}_j}$ is $2j-1$; if neither $x_j$ nor $\overline{x}_j$ appears in $\mathcal{C}_i$, then the due dates of $J_{x_j}$ and $J_{\overline{x}_j}$ are set to $2j$.
	  An example is shown in Table~\ref{tab1}.

	\begin{table}[ht]
	  \centering
	  \caption{The set of jobs and the due date scenarios for the formula $(x_1\vee \overline{x}_2 \vee \overline{x}_3)\wedge (\overline{x}_2 \vee \overline{x}_3 \vee x_4) \wedge (\overline{x}_1 \vee  x_2 \vee \overline{x}_4) \wedge (x_1 \vee x_2 \vee x_3) \wedge (x_1 \vee x_3 \vee \overline{x}_4)$.} \label{tab1}
			\begin{tabular}{l|lllll}
										 & $\xi_1$ & $\xi_2$ & $\xi_3$ & $\xi_4$ & $\xi_5$ \\ \hline
					$J_{x_1}$ & 1  & 2 & 2 & 1 & 1\\
					$J_{\overline{x}_1}$  & 2 & 2 & 1 & 2 & 2 \\  \hline
					$J_{x_2}$ &  4 & 4 & 3 & 3 & 4\\
					$J_{\overline{x}_2}$ & 3 & 3 & 4 & 4 & 4 \\  \hline
					$J_{x_3}$ & 6 & 6 & 6 & 5 & 5\\
					$J_{\overline{x}_3}$ & 5 & 5 & 6 & 6 & 6 \\  \hline
					$J_{x_4}$ &  8 & 7 & 8 & 8 & 8\\
					$J_{\overline{x}_4}$ & 8 & 8 & 7 & 8 & 7 \\  \hline
			\end{tabular}
		\end{table}
		Let us define a subset of the schedules $\Pi'\subseteq \Pi$ such that each schedule $\pi\in \Pi'$ is of the form $\pi=(J_1,J_1',J_2,J_2',\dots,J_n,J_n')$, where $J_j,J_j'\in\{J_{x_j},J_{\overline{x}_j}\}$ for $j\in [n]$. Observe that $\Pi'$ contains exactly $2^n$ schedules and each such a schedule corresponds to the assignment to the variables such that $x_j=0$ if $J_{x_j}$ is processed before $J_{\overline{x}_j}$ and $x_j=1$ otherwise.
	 Note that this correspondence is one-to-one. In the following we assume that $f(\pi, \xi_i)$ is the maximum tardiness, or the total tardiness, or the sum of unit penalties in $\pi$ under $\xi_i$. The reasoning will be the same for each of these cost functions. 
If $\pi \notin \Pi'$, then $f(\pi, \xi_i)>0$ for each scenario $\xi_i$.
Indeed, suppose that $\pi \notin \Pi'$ and let $J_j$ ($J_j')$ be the last job in $\pi$ which is not placed properly, i.e. $J_j,(J_j')\notin\{J_{x_j},J_{\overline{x}_j}\}$. Then $J_j$ ($J_j'$) is late under all scenarios.	
On the other hand, 
			if $\pi \in \Pi'$, then the number of scenarios under which no  job is late is equal to the number of unsatisfiable clauses for the assignment corresponding to $\pi$.		
Fix $\alpha\in (0,1)$. We will add to $\mathcal{U}$ one additional scenario $\xi'$ and define a probability distribution in $\mathcal{U}$, depending on the fixed $\alpha$, so that
	 the answer to \textsc{Min 3-Sat} is yes if and only if there is schedule $\pi$ for which $\mathbf{VaR}_{\alpha}[\mathrm{F}(\pi)]\leq 0$.  This will prove the stated result. We consider two cases:
	 \begin{enumerate}
\item  $l/m\geq \alpha$. We create dummy scenario $\xi'$ under which the due date of all jobs is equal to~0. The probability of this scenario is equal to  $\frac{l-\alpha m}{l}$. The probability of each of the remaining scenarios is equal to $\frac{1}{m}(1-\frac{l-\alpha m}{l})=\frac{\alpha}{l}$. Assume that the answer to \textsc{Min 3-Sat} is yes. So, there is an assignment to the variables which satisfies at most $m-l$ clauses. By the above construction, there is a schedule~$\pi\in \Pi'$ whose cost is positive under at most $m-l$ scenarios plus the dummy one. It holds
		$$\mathrm{Pr}[\mathrm{F}(\pi)>0]\leq \frac{l-\alpha m}{l} + (m-l)\frac{\alpha}{l}=1-\alpha.$$
	Hence $\mathrm{Pr}[\mathrm{F}(\pi)\leq 0]\geq \alpha$ and $\mathbf{VaR}_{\alpha}[\mathrm{F}(\pi)]\leq 0$. Assume that the answer to \textsc{Min 3-Sat} is no. Then, for every schedule $\pi$ there are more than $m-l$ scenarios under which the cost of $\pi$ is positive plus the dummy one. Hence 
$\mathrm{Pr}[\mathrm{F}(\pi)>0]> (1-\alpha)$ and $\mathrm{Pr}[\mathrm{F}(\pi)\leq 0]<\alpha$. 
In consequence, $\mathbf{VaR}_{\alpha}[\mathrm{F}(\pi)]>0$.
		
\item  $l/m<\alpha$.  We create dummy scenario $\xi'$ under which the due date of each job equals $2n$. The probability of the dummy scenario is $\frac{m\alpha-l}{m-l}$. The probability of each of the remaining scenarios is equal to $\frac{1}{m}(1-\frac{m\alpha-l}{m-l})=\frac{1-\alpha}{m-l}$.
 Assume that the answer to \textsc{Min 3-Sat} is yes. 
 So, there is an assignment to the variables which satisfies at most $m-l$ clauses. By the construction, there is a schedule $\pi$ whose cost is positive under at most $m-l$ scenarios. Hence 
 $$\mathrm{Pr}[\mathrm{F}(\pi)\leq 0]=1-\mathrm{Pr}[\mathrm{F}(\pi)>0]\geq 1-(m-l)\frac{1-\alpha}{m-l}=\alpha$$
  and $\mathbf{VaR}_{\alpha}[\mathrm{F}(\pi)]\leq 0$. Assume that the answer to \textsc{Min 3-Sat} is no. Then for each assignment more than $m-l$ clauses are satisfied. By the construction, for every schedule $\pi$ there are more than $m-l$ scenarios under which the cost $\pi$ is positive. Therefore
  $\mathrm{Pr}[\mathrm{F}(\pi)>0]>(m-l)\frac{1-\alpha}{m-l}=(1-\alpha)$
  and $\mathrm{Pr}[\mathrm{F}(\pi)\leq 0]<\alpha$, so
   $\mathbf{VaR}_{\alpha}[\mathrm{F}(\pi)]>0$.
  \end{enumerate}
\end{proof}

It follows from Theorem~\ref{thm1}  that the problem discussed in~\cite{NBN17} is strongly NP-hard and not at all approximable even in the very restrictive case in which all job processing times are equal to~1.
It was shown in~\cite{KZ16d} that $\textsc{Min-Exp}~1|p_j=1|T_{\max}$ is strongly NP-hard and hard to approximate within $7/6-\epsilon$ for any $\epsilon>0$. Hence, we immediately get from Theorem~\ref{thmexp} that for each constant $\alpha\in [0,1)$, $\textsc{Min-CVaR}_\alpha~1|p_j=1|T_{\max}$ is strongly NP-hard and hard to approximate within $7/6-\epsilon$ for any $\epsilon>0$. 

We consider now the problem with the total tardiness criterion. The deterministic $1||\sum T_j$ problem is known to be NP-hard~\cite{LA77}. However, $1|p_j=1|\sum T_j$ is polynomially solvable(see, e.g.,~\cite{B07}). The following result characterizes the complexity of the minmax version of this problem:
\begin{thm}
	\textsc{Min-Max}~$1|p_j=1|\sum T_j$ is strongly NP-hard and not approximable within $\frac{5}{4}-\epsilon$ for any $\epsilon>0$.
	\label{tmm}
\end{thm}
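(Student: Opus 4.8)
The plan is to give a gap-preserving reduction from the classical (strongly NP-hard) \textsc{3-Sat} problem, exploiting essentially the same variable/clause gadget as in the proof of Theorem~\ref{thm1}. Because every processing time equals $1$ and every due date will lie in a range of size $O(n)$, all numeric data are bounded by a polynomial in the input size, so ordinary NP-hardness of the reduction immediately yields \emph{strong} NP-hardness; the inapproximability statement will then follow purely from the size of the gap I construct.

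First I would introduce, for every variable $x_i$, the two unit jobs $J_{x_i},J_{\overline{x}_i}$ and restrict attention to the set $\Pi'$ of interleaved schedules $(J_1,J_1',\dots,J_n,J_n')$, in which the two jobs of variable $i$ occupy positions $2i-1,2i$ and which, exactly as in Theorem~\ref{thm1}, are in one-to-one correspondence with truth assignments. One due-date scenario $\xi_i$ is created per clause $\mathcal{C}_i$. The key modification with respect to Theorem~\ref{thm1} is to \emph{reverse} the role of the literals: I would pick the due dates inside $\xi_i$ so that, for a schedule in $\Pi'$, a literal that is \emph{false} under the corresponding assignment contributes exactly one unit of tardiness while a true literal contributes none (for a positive literal $x_j$ this is obtained by setting $d(J_{x_j})=2j$ and $d(J_{\overline{x}_j})=2j-1$, symmetrically for a negated literal, while variables absent from $\mathcal{C}_i$ get both due dates equal to $2j$ and never contribute). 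The total tardiness of a proper schedule under $\xi_i$ then equals the number of \emph{unsatisfied} literals of $\mathcal{C}_i$, i.e. $0,1,2$ when the clause is satisfied and $3$ when it is not. Minimising the maximum over all scenarios thus forces every clause to be satisfiable, which is precisely satisfiability.

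To turn the raw split ``at most $2$ versus exactly $3$'' into the announced ratio $\tfrac54$, I would add a fixed amount of background tardiness that is identical in every scenario: a small block of auxiliary unit jobs appended after the variable jobs, whose due dates are set one unit below their positions, contributing the constant $2$ to the total tardiness of every scenario. With this padding a satisfiable instance admits a (proper) schedule whose maximum-scenario cost is at most $2+2=4$, whereas in an unsatisfiable instance some clause scenario necessarily costs $2+3=5$; hence an algorithm with approximation ratio strictly below $\tfrac54$ would decide \textsc{3-Sat}. Calibrating the baseline so that the satisfiable optimum is exactly $4$ is what pins the constant to $5/4$ rather than to the larger ratio one would naively read off the satisfied/unsatisfied counts alone.

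The main obstacle — and the step that really dictates the exact constant — is ruling out \emph{improper} schedules $\pi\notin\Pi'$. Since all jobs are unit-length, an adjacent transposition displaces a job by only one position and hence changes its tardiness by only one unit, so (unlike in Theorem~\ref{thm1}, where it sufficed that a misplaced job be late under \emph{all} scenarios) I cannot make a single out-of-place job arbitrarily costly. I would handle this by a counting/exchange argument: using that for unit jobs each scenario is optimised by the EDD order, and that the auxiliary due-date profile $2,2,4,4,\dots$ admits an all-on-time assignment only for the interleaved ordering, I would show that any schedule can be transformed into one in $\Pi'$ without increasing its maximum-scenario tardiness, or equivalently that every non-interleaved schedule already pays at least the ``no''-value $5$ in some scenario. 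Making this exchange argument tight against all local perturbations, so that the padded baseline of $4$ exactly separates the yes- and no-instances, is the delicate part of the proof; the remaining verification of the two implications of the reduction is then routine.
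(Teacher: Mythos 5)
Your overall strategy --- a gap reduction from \textsc{3-Sat} using the clause gadget of Theorem~\ref{thm1} with the roles of true and false literals arranged so that a proper (interleaved) schedule pays one unit of tardiness per false literal --- is exactly the paper's starting point, and your calibration of the gap to $5$ versus $4$ would indeed yield the ratio $\frac{5}{4}$. The genuine gap is the step you yourself flag as ``delicate'': ruling out improper schedules $\pi\notin\Pi'$. Neither of the two tools you invoke can do this job. A scenario with the due-date profile $2,2,4,4,\dots$ penalizes a non-interleaved schedule by as little as one unit of tardiness (swap $J_{\overline{x}_1}$ with $J_{x_2}$: a single job is late, by one unit), which is far short of the three extra units needed to push such a schedule up to your no-value of $5$; in fact the claim that ``every non-interleaved schedule already pays at least $5$ in some scenario'' is false in your construction (for instance $J_{x_1},J_{x_2},J_{\overline{x}_2},J_{\overline{x}_1},\dots$ pays only $4$ in every scenario when $x_1$ never occurs negated). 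The alternative claim --- that any schedule can be transformed into an interleaved one without increasing the maximum-scenario tardiness --- is precisely the kind of statement that fails for min-max objectives: an adjacent swap that lowers tardiness in one scenario can raise it in another, which is the very source of hardness here, so no routine EDD/exchange argument will deliver it. Consequently one direction of your reduction (a schedule of cost at most $4$ implies satisfiability) is unproven. A smaller loose end: your auxiliary padding jobs need not be scheduled last, so even their ``constant'' contribution of $2$ to every scenario requires a separate argument.

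The paper closes this hole with one clean device that your proposal lacks: for every variable $x_j$ it adds a scenario $\xi_j'$ under which the two jobs $J_{x_j},J_{\overline{x}_j}$ have due date $2(j-1)+\frac{1}{2}$ and all remaining jobs have due date $2n$. Every interleaved schedule pays exactly $\frac{1}{2}+\frac{3}{2}=2$ under each $\xi_j'$, while in a non-interleaved schedule the last misplaced job completes at time at least $2k+1$ and is therefore tardy by at least $\frac{5}{2}$ under the corresponding $\xi_k'$. These per-variable scenarios thus simultaneously pin the yes-value at exactly $2$ and force every improper schedule above the no-threshold $\frac{5}{2}$ (proper schedules of unsatisfiable instances already pay $3$ in some clause scenario), giving the gap $\frac{5/2}{2}=\frac{5}{4}$ with no padding and no exchange argument. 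If you want to rescue your write-up, replace the auxiliary block and the exchange argument by such scenarios; the rest of your reduction then goes through essentially as you describe.
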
	
\begin{proof}
	We will show a reduction from the strongly NP-complete \textsc{3-Sat} problem, in which	
	 we are given boolean variables $x_1,\dots, x_n$, a collection of clauses $\mathcal{C}_1,\dots \mathcal{C}_m$, where each clause is a disjunction of at most $3$ literals (variables or their negations) and  
	 we ask if there is an assignment to the variables which satisfies all clauses  (see, e.g.,~\cite{GJ79}).
Given an instance of \textsc{3-Sat},
we create two jobs $J_{x_j}$ and $J_{\overline{x}_j}$ for each variable $x_j$, 
	$j\in [n]$, $|J|=2n$.
	A  due date scenario~$\xi_i$ corresponding to clause $\mathcal{C}_i=(l_1 \vee l_2 \vee l_3)$ is created in the same way as in the proof of Theorem~\ref{thm1}.
	Additionally, for each variable $x_j$ we create scenario $\xi_j'$ under which the due dates of $J_{x_j}$ and $J_{\overline{x}_j}$ are $2(j-1)+\frac{1}{2}$ and the due dates of the remaining jobs are set to $2n$ (see Table~\ref{tab1a}).
	We first show that the answer to \textsc{3-Sat} is yes if and only if there is a schedule $\pi$ such that $\max_{\xi\in \mathcal{U}} \sum_{j\in J} T_j(\pi, \xi)\leq 2$.	
	\begin{table}[ht]
	  \centering
	  \caption{The set of jobs and the due date scenarios for the formula $(x_1\vee \overline{x}_2 \vee \overline{x}_3)\wedge (\overline{x}_2 \vee \overline{x}_3 \vee x_4) \wedge (\overline{x}_1 \vee  x_2 \vee \overline{x}_4) \wedge (x_1 \vee x_2 \vee x_3) \wedge (x_1 \vee x_3 \vee \overline{x}_4)$. Schedule $\pi=(J_{x_1},J_{\overline{x}_1}, J_{\overline{x}_2}, J_{x_2}, J_{x_3}, J_{\overline{x}_3}, J_{\overline{x}_4}, J_{x_4}))$ corresponds to a truth assignment.} \label{tab1a}
	  \small
\begin{tabular}{l|lllll|lllll}
						  & $\xi_1$ & $\xi_2$ & $\xi_3$ & $\xi_4$ & $\xi_5$ & $\xi'_1$  & $\xi'_2$ & $\xi'_3$ & $\xi'_4$ \\ \hline
					$J_{x_1}$ & 1  & 2 & 2 & 1 & 1 & $\frac{1}{2}$ & 8 & 8 & 8\\
					$J_{\overline{x}_1}$  & 2 & 2 & 1 & 2 & 2 & $\frac{1}{2}$ & 8 & 8 & 8 \\  \hline
					$J_{x_2}$ &  4 & 4 & 3 & 3 & 4 & 8 & $2+\frac{1}{2}$ & 8 & 8 \\
					$J_{\overline{x}_2}$ & 3 & 3 & 4 & 4 & 4 & 8 & $2+\frac{1}{2}$ & 8 & 8\\  \hline
					$J_{x_3}$ & 6 & 6 & 6 & 5 & 5 & 8 &  8 & $4+\frac{1}{2}$ & 8 \\
					$J_{\overline{x}_3}$ & 5 & 5 & 6 & 6 & 6 & 8 & 8 & $4+\frac{1}{2}$ & 8 \\  \hline
					$J_{x_4}$ &  8 & 7 & 8 & 8 & 8 & 8 & 8 & 8 & $6+\frac{1}{2}$\\
					$J_{\overline{x}_4}$ & 8 & 8 & 7 & 8 & 7& 8 & 8 & 8 & $6+\frac{1}{2}$ \\  \hline
			\end{tabular}
		\end{table}
	
Assume that the answer to \textsc{3-Sat} is yes. Consider schedule $\pi=(J_1,J'_1,J_2,J'_2,\dots,J_n, J'_n)$, where $J_j,J'_j\in\{J_{x_j}, J_{\overline{x}_j}\}$. Furthermore $J_{x_j}$ is processed before $J_{\overline{x}_j}$ if and only if $x_j=1$. Since in every clause at least one literal is true, at most two jobs in $\pi$ are late under each scenario $\xi_i\in \mathcal{U}$. The tardiness of each job in $\pi$ under any $\xi_i\in \mathcal{U}$ is at most~1. Furthermore, the total tardiness in $\pi$ under any $\xi'_j$ is exactly~2.
	 In consequence,  $\max_{\xi\in \mathcal{U}} \sum_{j\in J} T_j(\pi, \xi)\leq 2$.
	
	Assume that there is a schedule $\pi$, such that $\max_{\xi\in \mathcal{U}} \sum_{j\in J} T_j(\pi, \xi)\leq 2$. We claim that $\pi=(J_1,J'_1,J_2,J'_2,\dots,J_n, J'_n)$, where $J_j,J'_j\in\{J_{x_j}, J_{\overline{x}_j}\}$. Suppose that this is not the case, and let $J_k$ ($J'_k$) be the last  job in $\pi$ which is not placed properly. The completion time of $J_k$ ($J'_k$) is at least $2k+1$. So, its tardiness under $\xi'_k$ is at least $2k+1-(2k-2+\frac{1}{2})=2.5$. Let $x_j=1$ if and only if $J_{x_j}$ is processed before $J_{\overline{x}_j}$ in $\pi$. Since only two jobs can be late under any $\xi_i$, this assignment satisfies all clauses and the answer to \textsc{3-Sat} is yes.
	
	In order to prove the lower approximation bound, it is enough to observe that if the answer to \textsc{3-Sat} is no, then each schedule has the total tardiness~3 under some scenario $\xi_i$ or~2.5 under some scenario $\xi'_j$, which gives a gap at least $\frac{5}{4}$.
\end{proof}

From the fact that $1||\sum T_j$ is weakly NP-hard (see~\cite{DL90}), we get immediately that more general $\textsc{Min-Exp}~1||\sum T_j$ problem is weakly NP-hard as well. The next theorem strengthens this result.
\begin{thm}
\label{thmsT}
	$\textsc{Min-Exp}~1||\sum T_j$ is strongly NP-hard.
\end{thm}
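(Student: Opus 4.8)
The plan is to give a polynomial reduction from the classical \emph{weighted} total tardiness problem $1||\sum_j w_j T_j$, which is strongly NP-hard (see, e.g.,~\cite{B07}), to $\textsc{Min-Exp}~1||\sum T_j$ with deterministic processing times and uncertain due dates. The guiding idea is that when the processing times are identical under every scenario, the completion times $C_j(\pi)$ do not depend on the scenario, so the expected total tardiness decomposes as $\mathbf{E}[\mathrm{F}(\pi)]=\sum_{j\in J}\sum_{\xi\in\mathcal{U}}{\rm Pr}[\xi]\,[C_j(\pi)-d_j(\xi)]^+$. Hence each job $j$ effectively contributes a weight equal to the total probability of the scenarios under which its due date is ``active'' (small), which lets us encode the weights $w_j$ purely through the probability distribution in $\mathcal{U}$.

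Concretely, given an instance of $1||\sum_j w_j T_j$ with jobs $J=\{1,\dots,n\}$, processing times $p_j$, due dates $d_j$ and positive integer weights $w_j$, I would build the following instance of $\textsc{Min-Exp}~1||\sum T_j$. The job set and the (deterministic) processing times $p_j$ are kept unchanged; let $P=\sum_{j\in J}p_j$ and $W=\sum_{j\in J}w_j$. I introduce one scenario $\xi_j$ per job: under $\xi_j$ the due date of job $j$ equals its original due date $d_j$, while the due date of every other job $k\neq j$ is set to $P$, and the probability is ${\rm Pr}[\xi_j]=w_j/W$. Since the makespan of every feasible schedule equals $P$, each job $k\neq j$ completes no later than its due date $P$ under $\xi_j$, hence is on time; thus under scenario $\xi_j$ the only possibly tardy job is $j$, with tardiness exactly $[C_j(\pi)-d_j]^+=T_j(\pi)$.

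It then follows, for every schedule $\pi$, that
\begin{equation}
\mathbf{E}[\mathrm{F}(\pi)]=\sum_{j\in J}{\rm Pr}[\xi_j]\sum_{k\in J}T_k(\pi,\xi_j)=\sum_{j\in J}\frac{w_j}{W}T_j(\pi)=\frac{1}{W}\sum_{j\in J}w_j T_j(\pi),
\end{equation}
so minimizing the expected total tardiness is equivalent, up to the positive constant $1/W$, to minimizing $\sum_{j\in J} w_j T_j(\pi)$. Consequently a schedule is optimal for the constructed instance if and only if it is optimal for the weighted total tardiness instance, which establishes NP-hardness. Note that this reduction also fits the setting of Section~\ref{sec4} (uncertain due dates, deterministic processing times).

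The point that requires care is that this must be a \emph{strong} NP-hardness proof, not merely a weak one. For this I would start from the version of $1||\sum_j w_j T_j$ in which all numbers $p_j,d_j,w_j$ are bounded by a polynomial in $n$, which is legitimate precisely because the problem is strongly NP-hard. The constructed instance then has polynomially bounded processing times and due dates ($p_j$, $d_j$, and $P$), and the probabilities $w_j/W$ are rationals whose numerators and denominators are polynomially bounded as well, so the reduction is computable in polynomial time and preserves strong NP-hardness. The only genuine subtlety is guaranteeing that the auxiliary scenarios contribute no spurious tardiness, which is exactly why the inactive due dates are set to the makespan $P$ rather than to the original $d_j$.
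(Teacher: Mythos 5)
Your reduction is correct and is essentially identical to the paper's own proof: the same per-job due date scenarios $\xi_j$ (due date $d_j$ for job $j$, $P$ for all others) with probabilities $w_j/W$, yielding $\mathbf{E}[\mathrm{F}(\pi)]=\frac{1}{W}\sum_{j\in J}w_jT_j(\pi)$ and hence a cost-preserving reduction from the strongly NP-hard $1||\sum w_jT_j$. Your explicit remark on why strong NP-hardness is preserved (polynomially bounded data in the source instance carrying over to the constructed instance and probabilities) is a point the paper leaves implicit, but the argument is the same.
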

\begin{proof}
We will show a polynomial time reduction from the deterministic $1||\sum w_j T_j$ problem, which is known to be strongly NP-hard~\cite{LA77}. Consider  an instance of $1||\sum w_j T_j$.  Let $W=\sum_{j\in J} w_j>0$ and $P=\sum_{j\in J} p_j$.
	We build  an instance of $\textsc{Min-Exp}~1||\sum T_j$ with the same set of jobs $J$ and job processing times $p_j$, $j\in J$. 	
	We create $K=|J|=n$ due date scenarios as follows. Under scenario $\xi_j$, $j\in [n]$,  job $j$ has due date equal to $d_j$ and all the remaining jobs have due dates equal to $P$. We also fix ${\Pr}[\xi_i]=w_i/W$, $i\in [n]$. For any schedule $\pi$, we get
	${\bf E}[{\rm F}(\pi)]=\sum_{i\in [K]} {\rm Pr}[\xi_i] \sum_{j\in J} T_j(\pi, \xi_i)=\frac{1}{W}\sum_{i\in [n]} w_i \sum_{j\in J} T_j(\pi, \xi_i)$.
By the construction, we get $\sum_{j\in J} T_j(\pi, \xi_i)=[C_i(\pi)-d_i]^+$, so
${\bf E}[{\rm F}(\pi)]=\frac{1}{W}\sum_{i\in [n]} w_i [C_i(\pi)-d_i]^+$.
In consequence $1||\sum w_j T_j$ and $\textsc{Min-Exp}~1||\sum T_j$ have the same optimal solutions and the theorem follows.
\end{proof}

It was shown in~\cite{AAK11} that \textsc{Min-Max}~$1|p_j=1|\sum U_j$ with uncertain due dates is strongly NP-hard. The following theorem strengthens this result:

\begin{thm}
	 \textsc{Min-Max}~$1|p_j=1|\sum U_j$ is not approximable within any constant factor unless P=NP. 	\label{tnapr}
\end{thm}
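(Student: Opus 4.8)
The plan is to prove, for every fixed constant $c\ge 1$, that approximating \textsc{Min-Max}~$1|p_j=1|\sum U_j$ within a factor $c$ is NP-hard; since $c$ is arbitrary this yields the stated non-approximability. I would fix an integer amplification parameter $t>c$ and reduce from \textsc{3-Sat}, constructing a scenario set so that a \emph{satisfiable} formula admits a schedule whose worst-case number of late jobs is bounded by a constant $b$ independent of $t$, while an \emph{unsatisfiable} formula forces, under some scenario and for every schedule, at least $tb$ late jobs. A $c$-approximation algorithm would then separate the two cases and decide \textsc{3-Sat}. Note that we cannot hope to make the satisfiable value $0$, since deciding whether a schedule with no late job under any scenario exists is a polynomially checkable Hall-type condition; hence the gap must be of the form ``small constant versus $\Omega(t)$''.

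As in the proof of Theorem~\ref{thm1}, I would introduce two jobs $J_{x_j},J_{\overline{x}_j}$ per variable and work with the canonical schedules $\Pi'=\{(J_1,J_1',\dots,J_n,J_n')\}$, each encoding a truth assignment; a few auxiliary due-date scenarios (forcing any job placed outside its block to be late together with $t$ companion jobs) would penalise every non-canonical schedule heavily enough that a near-optimal schedule may be assumed canonical. For a canonical schedule the position of a block's second job reveals the truth value of the corresponding variable, and, exactly as in Theorem~\ref{thm1}, one can choose the due dates of a clause scenario $\xi_i$ so that, under $\xi_i$, the number of late jobs equals the number of \emph{false} literals of $\mathcal{C}_i$. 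This already yields a clean $3/2$ gap ($\le 2$ in the satisfiable case against $3$ in the unsatisfiable case) and reproves strong NP-hardness, but the ratio is fixed at $3/2$.

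To amplify the gap I would attach to each clause $\mathcal{C}_i$ a bundle of $t$ \emph{dummy} jobs together with three local \emph{occurrence} jobs tied, via consistency scenarios, to the truth values of the literals of $\mathcal{C}_i$, and design $\xi_i$ as a \emph{capacity} gadget: each literal made true by the assignment releases one unit of early capacity, so that when at least one literal of $\mathcal{C}_i$ is true all $t$ dummies can be completed on time, whereas when all three literals are false the dummies overflow the available early positions and at least $t$ of them are late. Making the due dates of all jobs not belonging to clause $i$ loose under $\xi_i$ localises the count to clause $i$, so the worst scenario contributes only the constant baseline when $\phi$ is satisfiable and at least $t$ late jobs when $\phi$ is unsatisfiable, which is the desired $\Omega(t)$ gap.

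The main obstacle is precisely this amplification step, namely realising the conjunction ``all three literals false'' with unit-length jobs. Independent per-job deadlines can only \emph{add up} individual late indicators, which is exactly what caps the elementary construction at the $3/2$ ratio; genuine amplification requires the $t$ dummies to \emph{contend} for a bounded pool of early positions, and this contention must be forced to happen locally, among the jobs of a single clause, even though the single global permutation is shared by all scenarios and simultaneously fixes all variable values. The delicate points are therefore (i) forcing a controlled local layout (contiguity of each clause's bundle) through auxiliary scenarios without disturbing the variable gadget, (ii) preventing the consistency scenarios from creating spurious late jobs in the satisfiable case, and (iii) the accounting that shows the worst-case count stays at the constant baseline for a satisfying assignment while reaching $t$ for every assignment once some clause is left unsatisfied.
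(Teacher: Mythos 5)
Your proposal is not a proof but a plan whose decisive step is explicitly left open: you acknowledge that ``the main obstacle is precisely this amplification step,'' and indeed the capacity gadget that is supposed to turn an unsatisfied clause into $t$ late jobs is never constructed. The obstacle is structural, not merely technical. With unit jobs and due-date scenarios, the number of late jobs under a scenario $\xi$ is $\sum_j \mathbf{1}[\mathrm{pos}(j) > d_j(\xi)]$, a sum of indicators each depending only on that job's own position; the only mechanism by which jobs can interact is pigeonhole contention for the positions $1,\dots,D$. Your gadget needs this contention to be simultaneously (a) \emph{activated} under scenario $\xi_i$ exactly when all three literals of $\mathcal{C}_i$ are false, i.e.\ coupled to the positions of the variable-gadget jobs, and (b) \emph{localized} to clause $i$'s bundle, even though the schedule is one global permutation shared by all scenarios and the algorithm is free to place all dummies early and pay only $O(1)$ per scenario elsewhere. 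Your auxiliary ``companion'' scenarios face the same issue one level up: a companion job is late only as a function of its own position, so it cannot be made late ``because'' a block was violated except through further contention, and the regress is never closed. Without an explicit construction and the accounting in points (i)--(iii), there is no reduction, and it is not evident one of this form exists.

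The paper avoids building amplification from scratch: it gives a simple cost-preserving reduction from the \textsc{Min-Max 0-1 Selection} problem, which was already shown in~\cite{KKZ13} to be inapproximable within any constant factor. Items become unit jobs; under the scenario derived from $\xi_i$, a job with $c_j(\xi_i)=1$ gets due date $n-q$ and a job with $c_j(\xi_i)=0$ gets due date $n$. Then the jobs in the last $q$ positions play the role of the selected set, and the number of late jobs under each scenario equals the cost of that selection under the corresponding cost scenario, so the inapproximability transfers verbatim. In effect, the gap amplification you are trying to engineer inside the scheduling problem has already been done once and for all inside the selection problem, and the scheduling theorem only needs to inherit it. If you want to salvage your approach, the honest route is to first prove (or quote) constant-factor inapproximability for such a selection-type core problem and then reduce, which is exactly the paper's proof; your correct preliminary observations (the zero-cost case is polynomially decidable, and the clause-scenario construction alone caps out at a $3/2$ gap) explain why a direct one-shot reduction from \textsc{3-Sat} cannot suffice.
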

\begin{proof}
Consider the following \textsc{Min-Max 0-1 Selection} problem. 
We are given a set of items $E=\{e_1,e_2,\dots,e_n\}$ and an integer $q\in [n]$.
 For each item $e_j$, $j\in [n]$, there is a cost $c_j(\xi_i)\in\{0,1\}$ under scenario $\xi_i$, $i\in [K]$. We seek a selection $X\subseteq E$ of exactly~$q$ items, $|X|=q$, which minimizes the maximum cost over all scenarios, i.e. the value of $\max_{i\in [K]} \sum_{e_i\in X} c_j(\xi_i)$. This problem was discussed in~\cite{KKZ13}, where it was shown that it is not approximable within any constant factor $\gamma\geq 1$.  We will show that there is a cost preserving reduction from \textsc{Min-Max 0-1 Selection} to the considered scheduling problem, which will imply the stated result.
 
 Given an instance of \textsc{Min-Max 0-1 Selection}, we build the corresponding instance of  \textsc{Min-Max}~$1|p_j=1|\sum U_j$  as follows.  We create a set of jobs $J=E$, $|J|=n$,  with deterministic unit processing times. For each $i\in [K]$, if $c_j(\xi_i)=1$ then $d_j(\xi'_i)=n-q$, and if $c_j(\xi_i)=0$, then $d_j(\xi'_i)=n$.  So, we create $K$ due date scenarios that correspond to the cost scenarios of \textsc{Min-Max 0-1 Selection}. 
 
 Suppose that there is a solution $X$ to \textsc{Min-Max 0-1 Selection} such that $\sum_{e_i\in X} c_j(\xi_i)\leq C$ for each $i\in [K]$.
 Hence $X$ contains at most $C$ items, $C\leq q$, with the cost equal to~1 under each scenario.  
 In the corresponding schedule $\pi$,
 we first process $n-q$ jobs from $J\setminus X$ and then the jobs in $X$ in any order. It is  easily seen that there are at most $C$ late jobs in $\pi$ under each scenario $\xi'_i$, hence the maximum cost of schedule $\pi$ is at most $C$.  Conversely, let $\pi$ be a schedule in which there are at most $C$ late jobs under each scenario $\xi'_i$.  Clearly $C\leq q$ since the first $n-q$ jobs in $\pi$ must be on-time in all scenarios. Let us form solution $X$ by choosing the items corresponding to the last $q$ jobs in $\pi$. Among these jobs at most $C$ are late under each scenario, hence the cost of $X$ is at most $C$ under each scenario $\xi_i$.

\end{proof}

Thus, by Theorem~\ref{thm01},  $\textsc{Min-CVaR}_\alpha~1|p_j=1|\sum U_j$ is strongly NP-hard for any $\alpha\in (0,1)$ 
(notice that 
 $p_{\max}=1$ in the proof of Theorem~\ref{thm01} and in the new scenario set $\mathcal{U}'$ still only due dates are uncertain).
\begin{thm}
\label{thmsU}
	$\textsc{Min-Exp}~1||\sum U_j$ is NP-hard.
\end{thm}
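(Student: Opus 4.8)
The plan is to give a polynomial-time, cost-preserving reduction from the deterministic weighted number of late jobs problem $1||\sum w_j U_j$, which is NP-hard~\cite{GJ79}, following exactly the pattern of the reduction in the proof of Theorem~\ref{thmsT}. Given an instance of $1||\sum w_j U_j$ with job set $J$, processing times $p_j$, due dates $d_j$ and positive integer weights $w_j$, I would build an instance of $\textsc{Min-Exp}~1||\sum U_j$ that keeps the same jobs and the same processing times, encoding the weights through the probability distribution over the scenarios rather than through the objective function itself.

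Concretely, let $P=\sum_{j\in J} p_j$ and $W=\sum_{j\in J} w_j>0$, and create $K=n$ due date scenarios: under scenario $\xi_i$, $i\in [n]$, job $i$ retains its original due date $d_i$, while every other job is assigned the due date $P$. Fix ${\rm Pr}[\xi_i]=w_i/W$. Since $C_j(\pi)\le P$ for every job $j$ and every schedule $\pi$ (the makespan never exceeds $P$), under scenario $\xi_i$ only job $i$ can be late, whence $\sum_{j\in J} U_j(\pi,\xi_i)=U_i(\pi)$, where $U_i(\pi)$ is the late-job indicator of the original deterministic instance. Therefore
\[
{\bf E}[{\rm F}(\pi)]=\sum_{i\in [K]} {\rm Pr}[\xi_i]\sum_{j\in J} U_j(\pi,\xi_i)=\frac{1}{W}\sum_{i\in [n]} w_i U_i(\pi),
\]
which is proportional to the weighted number of late jobs of the source instance. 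Hence $1||\sum w_j U_j$ and the constructed instance of $\textsc{Min-Exp}~1||\sum U_j$ share the same set of optimal schedules, and the reduction is complete.

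Since the construction is clearly polynomial and $1||\sum w_j U_j$ is NP-hard, the NP-hardness of $\textsc{Min-Exp}~1||\sum U_j$ follows. I do not expect a genuine obstacle here: the argument is a direct transcription of the tardiness reduction of Theorem~\ref{thmsT} with $T_j$ replaced by $U_j$, the only fact worth stating explicitly being that padding every non-distinguished job with the due date $P$ forces it to be on-time in every schedule. Note finally that because $1||\sum w_j U_j$ is only weakly NP-hard, this reduction yields only (weak) NP-hardness—exactly what the statement claims—and an upgrade to strong NP-hardness would require a strongly NP-hard weighted source problem, which is not available for the $\sum w_j U_j$ criterion.
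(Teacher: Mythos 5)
Your proposal is correct and is exactly the paper's argument: the paper proves this theorem by invoking the same reduction as in Theorem~\ref{thmsT} (job $i$ keeps due date $d_i$ under scenario $\xi_i$, all other jobs get due date $P$, probabilities $w_i/W$), applied to the NP-hard problem $1||\sum w_j U_j$, and you have simply written out the details, including the key observation that jobs with due date $P$ can never be late. Your closing remark that this only establishes weak NP-hardness also matches the paper's own comment following the theorem.
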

\begin{proof}
	Choose the deterministic $1||\sum w_j U_j$ problem, which is known to be NP-hard~\cite{KR74}. The reduction from this problem to $\textsc{Min-Exp}~1||\sum U_j$ is the same as the one in the proof of Theorem~\ref{thmsT}.
\end{proof}
It is worth noting that in the proof of Theorem~\ref{thmsU} we require an arbitrary probability distribution in the scenario set and we have  shown that
 the problem is only weakly NP-hard.  Its  complexity  for uniform probability distribution is open.
 
\subsection{Uncertain processing times}

In this section we characterize the complexity of the  problems under consideration when only processing times
 are uncertain. 
It has been shown in~\cite{KY97} that $\textsc{Min-Max}~1||\sum C_j$ is strongly NP-hard. Furthermore, this problem is also hard to approximate within $\frac{6}{5}-\epsilon$ for any $\epsilon>0$~\cite{MNO13}.  Using Theorem~\ref{thm01}, we can immediately conclude that the same negative result holds for $\textsc{Min-VaR}_{\alpha}~1||\sum C_j$ for any $\alpha\in (0,1]$. Also, strong NP-hardness of the min-max problem implies that $\textsc{Min-CVaR}_{\alpha}~1||\sum C_j$ is strongly NP-hard for each fixed $\alpha \in (0,1)$. Observe that the boundary case $\alpha=0$ (i.e. $\textsc{Min-Exp}~1||\sum C_j$) is polynomially solvable, as it easily reduces to the deterministic $1||\sum C_j$ problem. 
Since $1||\sum T_j$ is a special case of $1||\sum C_j$, with $d_j=0$ for each $j\in J$, the same negative results are true for the problem with the total tardiness criterion. Observe, however that $\textsc{Min-Exp}~1||\sum T_j$ is also NP-hard, since the deterministic $1||\sum T_j$ problem is known to be weakly NP-hard~\cite{DL90}.

It has been shown in~\cite{AC08} that \textsc{Min-Max}~$1||\sum U_j$ with uncertain processing times and deterministic due dates is NP-hard. The following theorem strengthens this result:

\begin{thm}
\label{thmsUp}
	\textsc{Min-Max}~$1||\sum U_j$ is strongly NP-hard. This assertion remains true even when all the jobs have  a common deterministic due date.
\end{thm}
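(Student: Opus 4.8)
The plan is to reduce from the strongly NP-complete \textsc{Independent Set} problem (see, e.g.,~\cite{GJ79}) and to exploit the rigid structure that a common due date imposes on the cost. Fix a common deterministic due date $d$. For any permutation $\pi=(\pi_1,\dots,\pi_n)$ and any scenario $\xi$, the completion times $C_{\pi_1}(\pi,\xi)\le C_{\pi_2}(\pi,\xi)\le\cdots$ are nondecreasing, so the on-time jobs under $\xi$ form exactly a prefix of $\pi$; the number of late jobs under $\xi$ equals $n-m_\xi(\pi)$, where $m_\xi(\pi)$ is the length of the longest prefix whose total processing time is at most $d$. Hence $\max_{\xi}\sum_{j\in J} U_j(\pi,\xi)=n-\min_\xi m_\xi(\pi)$, and the first step is to prove the equivalence: there is a schedule with at most $n-m$ late jobs in every scenario if and only if there is a job subset $S$ with $|S|=m$ and $\sum_{j\in S}p_j(\xi)\le d$ for every scenario $\xi$. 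One direction places such an $S$ in the first $m$ positions (monotonicity makes the whole prefix fit); the other takes $S$ to be the first $m$ jobs of a good schedule.

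Second, I would design a gadget turning this selection problem into \textsc{Independent Set}. Given a graph $G=(V,E)$ with $N=|V|$ and a target $m$, create one job per vertex and one scenario $\xi_e$ per edge $e=\{u,v\}$. Under $\xi_e$ set the processing times of the two endpoint jobs $J_u,J_v$ to $1$ and of every other job to $0$, and let the common due date be $d=1$ (equivalently one may use endpoint time $2$, all others $1$, and $d=m+1$ to keep processing times strictly positive). Then $\sum_{j\in S}p_j(\xi_e)\le d$ holds exactly when $|S\cap\{u,v\}|\le1$, i.e.\ $S$ contains at most one endpoint of each edge. Consequently a subset $S$ with $|S|=m$ satisfies all scenario constraints if and only if $S$ is an independent set of size $m$ in $G$.

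Combining the two steps, the optimal value of \textsc{Min-Max}~$1||\sum U_j$ on the constructed instance equals $N-\alpha(G)$, where $\alpha(G)$ is the independence number; deciding whether it is at most $N-m$ is therefore equivalent to deciding $\alpha(G)\ge m$, which is NP-complete. Since all processing times and the due date are bounded by a constant (or by $O(N)$ in the strictly positive variant) and the number $K=|E|$ of scenarios is polynomial, the reduction uses only polynomially bounded numbers, yielding strong NP-hardness even under a single common deterministic due date, thereby strengthening the weak NP-hardness of~\cite{AC08}. The main technical point to get right is the converse of the prefix/selection equivalence: one must argue that whenever the first $m$ jobs are \emph{not} independent, some edge scenario $\xi_e$ places both endpoints among the first $m$ positions, forcing the prefix of length $m$ to exceed $d$ and hence producing strictly more than $N-m$ late jobs under $\xi_e$; this is exactly where the monotonicity of completion times and the precise choice of $d$ are used.
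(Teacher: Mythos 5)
Your proof is correct, but it follows a genuinely different route from the paper. The paper reduces from \textsc{3-Sat}: it creates a pair of jobs $J_{x_i},J_{\overline{x}_i}$ per variable with common due date $2$, a scenario per clause (processing time $1$ for the jobs of the negated literals, $0$ otherwise), and an extra scenario per variable (processing time $2$ for both jobs of the pair, $0$ otherwise) that forces the first $n$ positions to encode a truth assignment; satisfiability is then equivalent to the existence of a schedule with at most $n$ late jobs in every scenario. You instead prove a clean structural lemma first -- under a common deterministic due date the on-time jobs always form a prefix, so the min-max problem is \emph{equivalent} to selecting $m$ jobs whose total processing time is at most $d$ under every scenario -- and then encode \textsc{Independent Set} into that selection problem with one $0/1$ scenario per edge, giving optimal value $N-\alpha(G)$. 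Both arguments are sound and both use only polynomially bounded numbers, so both yield strong NP-hardness (your appeal to ``strong'' NP-completeness of \textsc{Independent Set} is a slight terminology slip, since that notion is only meaningful for problems with numerical parameters, but it is harmless: what matters is that your constructed instance has processing times in $\{0,1\}$). Your approach buys a reusable characterization of the common-due-date case that isolates its combinatorial core as a min-max selection problem (closely related to the \textsc{Min-Max 0-1 Selection} problem the paper uses elsewhere, in Theorem~\ref{tnapr}), and it avoids the paper's variable-pair forcing gadget entirely; the paper's approach, by contrast, reuses machinery from its Theorem~\ref{tmm} proof and keeps all hardness proofs stylistically uniform around satisfiability. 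Note that neither argument yields an inapproximability gap here, since in your reduction the threshold $N-m$ is not cost-preserving with respect to the selection objective, and in the paper's the yes/no gap is only $(n+1)/n$.
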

\begin{proof}
We show a polynomial time reduction from the \textsc{3-Sat} problem (see the proof of Theorem~\ref{tmm}). Given an instance of \textsc{3-Sat}, we create an instance of \textsc{Min-Max}~$1||\sum U_j$ in the following way. For each variable $x_i$ we create two jobs $J_{x_i}$ and $J_{\overline{x}_i}$, so $J$ contains $2n$ jobs. The due dates of all these jobs are the same under each scenario and equal 2. For each clause $C_j=(l_1,l_2,l_3)$ we construct processing time scenario $\xi_i$, under which the jobs $J_{\overline{l}_1}, J_{\overline{l}_2}, J_{\overline{l}_3}$  have processing time equal to 1 and all the remaining jobs have processing times equal to 0. Then, for each pair of jobs $J_{x_i}, J_{\overline{x}_i}$ we construct scenario $\xi'_i$ under which the processing times of $J_{x_i}, J_{\overline{x}_i}$ are 2 and all the remaining jobs have processing times equal to 0.  A sample reduction is shown in Table~\ref{tab1b}.
We will show that the answer to \textsc{3-Sat} is yes if and only if there is a schedule $\pi$ such that $\max_{\xi \in \mathcal{U}} \sum_{j\in J} U(\pi,\xi)\leq n$.
\begin{table}[ht]
	  \centering
	  \caption{Processing time scenarios for the formula $(x_1\vee \overline{x}_2 \vee \overline{x}_3)\wedge (\overline{x}_2 \vee \overline{x}_3 \vee x_4) \wedge (\overline{x}_1 \vee  x_2 \vee \overline{x}_4) \wedge (x_1 \vee x_2 \vee x_3) \wedge (x_1 \vee x_3 \vee \overline{x}_4)$. Schedule $\pi=(J_{x_1},J_{\overline{x}_2},J_{x_3},J_{\overline{x}_4} | J_{\overline{x}_1}, J_{x_2}, J_{\overline{x}_3}, J_{x_4})$ corresponds to a satisfying truth assignment.} \label{tab1b}
			\begin{tabular}{l|lllll|llll|l}
										 & $\xi_1$ & $\xi_2$ & $\xi_3$ & $\xi_4$ & $\xi_5$ & $\xi'_1$ & $\xi'_2$ & $\xi'_3$ & $\xi'_4$ & $d_i$ \\ \hline
$J_{x_1}$ 		& 0  & 0 &  1 & 0 & 0 &  2 & 0 & 0 & 0 &  2\\
$J_{\overline{x}_1}$  & 1 & 0 & 0 &  1 &  1&  2 & 0 & 0 & 0 &  2\\  \hline
$J_{x_2}$			 &   1 &  1 & 0 & 0 & 0 & 0 &  2 & 0 & 0 & 2\\
$J_{\overline{x}_2}$   & 0 & 0 &  1 & 1 & 0 & 0 &  2 & 0 & 0 &  2\\   \hline
$J_{x_3}$ 		&  1 &  1 & 0 & 0 & 0 & 0 & 0 & 2 & 0 &  2\\
$J_{\overline{x}_3}$ & 0 & 0 & 0 &  1 &  1 & 0 & 0 &  2 & 0 &  2\\  \hline
$J_{x_4}$ 		&  0 & 0 &  1 & 0 & 1 & 0 & 0 & 0 &  2 & 2\\
$J_{\overline{x}_4}$ & 0 & 1 & 0 & 0 & 0 & 0 & 0 & 0 &  2 &  2\\  \hline
			\end{tabular}
		\end{table}

Assume that the answer to \textsc{3-Sat} is yes. Then there exists a truth assignment to the variables which satisfies all the clauses. Let us form schedule $\pi$ by processing first the jobs corresponding to true literals in any order and processing then the remaining jobs in any order. From the construction of the scenario set it follows that the completion time of the $n$th job in $\pi$ under each scenario is not greater than 2. In consequence, at most $n$ jobs in $\pi$ are late under each scenario and $\max_{\xi \in \mathcal{U}} \sum_{j\in J} U(\pi,\xi)\leq n$. 

Assume that there is a schedule $\pi$ such that $\sum_{j\in J} U(\pi,\xi)\leq n$ for each $\xi \in \mathcal{U}$, which means that at most $n$ jobs in $\pi$ are late under each scenario.  Observe first that $J_{x_i}$ and $J_{\overline{x}_i}$ cannot appear among the first $n$ jobs in $\pi$ for any $i\in [n]$; otherwise more than $n$ jobs would be late in $\pi$ under $\xi'_i$. Hence the first $n$ jobs in $\pi$ correspond to a truth assignment to the variables $x_1,\dots,x_n$, i.e. when $J_l$ is among the first $n$ jobs, then the literal $l$ is true. Since $f(\pi,\xi_i)\leq n$, the completion time of the $n$-th job in $\pi$ under $\xi_i$ is not greater than 2. We conclude that at most two jobs among the first $n$ job have processing time equal to 1 under $\xi_i$, so there are at most two false literals for each clause and the answer to \textsc{3-Sat} is yes.
\end{proof}

 We thus get from Theorem~\ref{thmsUp} that
	 $\textsc{Min-VaR}_{\alpha}~1||\sum U_j$ is strongly NP-hard for any $\alpha\in (0,1)$ and  $\textsc{Min-CVaR}_{\alpha}~1||\sum U_j$ is strongly NP-hard for any $\alpha\in (0,1]$. The boundary case with $\alpha=0$ (i.e. \textsc{Min-Exp}~$1||\sum U_j$ with uncertain processing times) is an interesting open problem.

\section{Positive complexity results}
\label{sec5}

In this section we establish some positive complexity results. Namely, we provide several polynomial and approximation algorithms for particular problems. A summary of the results can be found in Table~\ref{tabs3}.

\subsection{Problems with uncertain due dates}

Consider the $\textsc{Min-Exp}~1|p_j=1|\sum w_jU_j$ problem with uncertain due dates. We  introduce variables $x_{ij}\in \{0,1\}$, $i\in [n]$, $j\in [n]$, where $x_{ij}=1$ if $j\in [n]$ is the $i$th job in the schedule constructed. The variables satisfy the assignment constraints, i.e. $\sum_{i\in [n]} x_{ij}=1$ for each $j\in [n]$ and $\sum_{j\in [n]} x_{ij}=1$ for each $i\in [n]$. If $x_{ij}=1$, then the completion time of  job $j$ equals $i$. Define $c_{ijk}=w_j$ if $i>d_j(\xi_k)$ and $c_{ijk}=0$ otherwise, for each $i,j\in [n]$ and $k\in [K]$.
If the variables $x_{ij}$ describe $\pi$, then
$${\bf E}[{\rm F}(\pi)]=\sum_{k\in [K]} \sum_{i\in [n]}\sum_{j\in [n]}{\rm Pr}[\xi_k] c_{ijk}x_{ij}=\sum_{i\in [n]}\sum_{j\in [n]}c^*_{ij} x_{ij},$$
where 
$c^*_{ij}=\sum_{k\in [K]} {\rm Pr}[\xi_k] c_{ijk}$.
Hence the problem is equivalent to the \textsc{Minimum Assignment} with the cost matrix $c^*_{ij}$. The same result holds for $\textsc{Min-Exp}~1|p_j=1|\sum w_j T_j$. It is enough to define $c_{ijk}=w_j[i-d_j(\xi_k)]^+$ for $i,j\in [n]$, $k\in [K]$. We thus get the following  results:
\begin{thm}
	$\textsc{Min-Exp}~\mathcal{P}$
	is polynomially solvable, when
	 $\mathcal{P}\in\{1|p_j=1|\sum w_j U_j,\, 1|p_j=1|\sum w_j T_j\}$.
	\label{texppol}
\end{thm}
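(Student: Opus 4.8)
The plan is to exploit the fact that, with unit processing times and no precedence constraints, a schedule is nothing more than an assignment of jobs to the $n$ time slots: the job placed in position $i$ completes at time exactly $i$, independently of which jobs occupy the earlier positions. I would therefore encode a schedule by binary variables $x_{ij}$, $i,j\in[n]$, where $x_{ij}=1$ means that job $j$ occupies position $i$, subject to the standard assignment constraints $\sum_{i\in[n]} x_{ij}=1$ for each $j$ and $\sum_{j\in[n]} x_{ij}=1$ for each $i$. Feasible $0/1$ solutions of this system are precisely permutation matrices, so they are in one-to-one correspondence with the schedules in $\Pi$.

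The key observation is that both cost functions are \emph{separable over jobs}, with each job's contribution depending only on its own completion time, hence only on its position. Under scenario $\xi_k$, placing job $j$ in position $i$ contributes a cost $c_{ijk}$ that is well defined regardless of the rest of the schedule: $c_{ijk}=w_j$ when $i>d_j(\xi_k)$ and $c_{ijk}=0$ otherwise for $\sum w_j U_j$, and $c_{ijk}=w_j[i-d_j(\xi_k)]^+$ for $\sum w_j T_j$. By linearity of expectation, the expected cost of the schedule described by $x$ equals $\sum_{i\in[n]}\sum_{j\in[n]} c^*_{ij}\,x_{ij}$, where $c^*_{ij}=\sum_{k\in[K]}{\rm Pr}[\xi_k]\,c_{ijk}$ can be precomputed in $O(Kn^2)$ time.

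Minimizing a linear function $\sum_{i,j} c^*_{ij}\,x_{ij}$ over the assignment polytope is exactly the \textsc{Minimum Assignment} problem, which is solvable in polynomial time, e.g. by the Hungarian algorithm. This yields a polynomial algorithm for both problems; the only difference between them is the definition of the coefficients $c_{ijk}$. The whole argument is therefore a cost-preserving reduction to linear assignment.

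I expect no serious obstacle here. The single point that must be checked — and the one that confines the result to these particular cost functions — is that the objective decomposes as a sum of per-job terms, each a function of that job's completion time alone, so that fixing the position fixes the contribution and expectation commutes with the position-wise decomposition. This separability holds for $\sum w_j U_j$ and $\sum w_j T_j$, but would fail for a non-separable objective such as $\max_j w_j T_j$, which is why that variant does not reduce to assignment and is treated separately in the paper.
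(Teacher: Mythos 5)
Your proposal is correct and follows essentially the same route as the paper: encode the schedule by assignment variables $x_{ij}$ (position $i$ $\Leftrightarrow$ completion time $i$ under unit processing times), define per-scenario costs $c_{ijk}$, aggregate them by linearity of expectation into $c^*_{ij}=\sum_{k}{\rm Pr}[\xi_k]c_{ijk}$, and solve the resulting \textsc{Minimum Assignment} problem in polynomial time. Your closing remark on separability being the crucial (and limiting) ingredient is accurate and consistent with why the paper treats the bottleneck objective $\max_j w_jT_j$ by entirely different means.
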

From Theorems~\ref{texppol} and~\ref{thmexp}, we  immediately get the following approximation result:
\begin{thm}
	$\textsc{Min-CVaR}_{\alpha}~\mathcal{P}$ is
	 approximable within $\rho=\min\{\frac{1}{{\rm Pr}_{\min}},\frac{1}{1-\alpha}\}$, when
	$\mathcal{P}\in\{1|p_j=1|\sum w_j U_j, \,1|p_j=1|\sum w_j T_j\}$.
\end{thm}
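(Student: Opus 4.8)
The plan is to derive the statement as an immediate corollary, combining the polynomial solvability of the expectation version with the general expectation-to-CVaR transfer of Theorem~\ref{thmexp}. First I would note that Theorem~\ref{texppol} already establishes that $\textsc{Min-Exp}~\mathcal{P}$ is polynomially solvable for both $\mathcal{P}=1|p_j=1|\sum w_j U_j$ and $\mathcal{P}=1|p_j=1|\sum w_j T_j$, via the reduction to \textsc{Minimum Assignment} with cost matrix $c^*_{ij}$. This is precisely the boundary case $\sigma=1$ contemplated in assertion~1 of Theorem~\ref{thmexp}.

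The second step is to invoke assertion~1 of Theorem~\ref{thmexp} with $\sigma=1$: it yields directly that $\textsc{Min-CVaR}_{\alpha}~\mathcal{P}$ is approximable within $\sigma\rho=\rho=\min\{\frac{1}{{\rm Pr}_{\min}},\frac{1}{1-\alpha}\}$ for every constant $\alpha\in[0,1)$. Operationally, the approximation algorithm simply computes a schedule $\pi^*$ of minimum expected cost in polynomial time and returns it; the inequality chain in the proof of Theorem~\ref{thmexp}, which rests on Lemma~\ref{lem01}, then certifies $\textbf{CVaR}_\alpha[{\rm F}(\pi^*)]\leq\rho\,\textbf{CVaR}_\alpha[{\rm F}(\pi')]$ for the CVaR-optimal schedule $\pi'$.

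Since essentially no new argument is required, the only point I would verify is that the hypotheses of Lemma~\ref{lem01} hold, namely that the schedule cost is a nonnegative discrete random variable. This is clear for both objectives, as $f(\pi,\xi)$ is a sum of the nonnegative quantities $w_j U_j$ or $w_j T_j$, so $\textbf{E}[{\rm F}(\pi)]\leq\textbf{CVaR}_\alpha[{\rm F}(\pi)]\leq\rho\,\textbf{E}[{\rm F}(\pi)]$ is available. I do not anticipate any genuine obstacle here: the hard content has already been absorbed into Theorems~\ref{texppol} and~\ref{thmexp}, and the present result follows simply by specializing the transfer theorem to $\sigma=1$.
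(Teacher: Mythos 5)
Your proposal is correct and matches the paper's own argument exactly: the paper derives this theorem immediately from Theorem~\ref{texppol} (polynomial solvability of $\textsc{Min-Exp}~\mathcal{P}$ via the assignment reduction) combined with assertion~1 of Theorem~\ref{thmexp} applied with $\sigma=1$. Your additional check that the schedule cost is a nonnegative discrete random variable, so Lemma~\ref{lem01} applies, is a sensible sanity check but introduces nothing beyond what the paper's one-line derivation already uses.
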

Since $\textsc{Min-Max}~1|p_j=1|\sum w_jU_j$ and $\textsc{Min-Max}~1|p_j=1|\sum w_jT_j$ are special cases of 
 the min-max version of  \textsc{Minimum Assignment}, which is approximable within~$K$~(see, e.g.,~\cite{ABV09}), 
 both problems are approximable within~$K$ as well.

We now study the $\textsc{Min-Exp}~1|| \sum w_j T_j$ problem with uncertain due dates and deterministic processing times. This problem is strongly NP-hard since $1||\sum w_j T_j$ is strongly NP-hard. The expected cost of $\pi$ can be rewritten as  ${\bf E}[{\rm F}(\pi)]=\sum_{j\in J} \sum_{i\in [K]} {\rm Pr}[\xi_i] [C_j(\pi)-d_j(\xi_i)]^+$.
 We thus get a single machine scheduling problem $1||\sum f_j$ with job-dependent cost
 functions of form $f_j(C_j(\pi))= \sum_{i\in [K]} {\rm Pr}[\xi_i] [C_j(\pi)-d_j(\xi_i)]^+$, $j\in J$.
 Note also that these functions are nonnegative and nondecreasing with respect to~$C_j(\pi)$. 
 The same analysis can be done for  the $\textsc{Min-Exp}~1|| \sum w_j U_j$ problem with uncertain due dates and deterministic processing times.
 Hence and from~\cite{CMSV17}, where a $(4+\epsilon)$-approximation algorithm, for any $\epsilon>0$, for this class of problems was provided,
 we  get the following result (see also Theorem~\ref{thmexp}):
  \begin{thm}
\label{tpdexp}
 If $\mathcal{P}\in \{1||\sum w_j U_j,\, 1||\sum w_j T_j\}$, then 
 $\textsc{Min-Exp}$~$\mathcal{P}$ is approximable within  $4+\epsilon$ and  $\textsc{Min-CVaR}_\alpha~\mathcal{P}$
 is approximable within  $\min\{\frac{4+\epsilon}{{\rm Pr}_{\min}},\frac{4+\epsilon}{1-\alpha}\}$, $\epsilon>0$,
 for any $\epsilon>0$ and each constant $\alpha\in [0,1)$
\end{thm}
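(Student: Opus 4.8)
The plan is to combine two facts that the surrounding discussion has already prepared---the decomposition of the expected cost into a sum of job-dependent monotone cost functions, and the general $(4+\epsilon)$-approximation of~\cite{CMSV17} for $1||\sum f_j$---and then to transfer the guarantee to the CVaR objective through Theorem~\ref{thmexp}.

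First I would treat the $\textsc{Min-Exp}~\mathcal{P}$ part. For $\mathcal{P}=1||\sum w_j T_j$ the expectation has already been written as $\mathbf{E}[{\rm F}(\pi)]=\sum_{j\in J} f_j(C_j(\pi))$ with $f_j(t)=w_j\sum_{i\in[K]}{\rm Pr}[\xi_i]\,[t-d_j(\xi_i)]^+$, a nonnegative, nondecreasing, piecewise-linear function of the completion time. I would record the completely analogous decomposition for $\mathcal{P}=1||\sum w_j U_j$, where by linearity of expectation $\mathbf{E}[{\rm F}(\pi)]=\sum_{j\in J} f_j(C_j(\pi))$ with $f_j(t)=w_j\sum_{i\in[K]}{\rm Pr}[\xi_i]\,\mathbf{1}[t>d_j(\xi_i)]$; this is a nonnegative nondecreasing step function, jumping upward as $t$ passes each due date $d_j(\xi_i)$. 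In both cases $f_j$ is evaluable in $O(K)$ time, so each $\textsc{Min-Exp}~\mathcal{P}$ is exactly an instance of the min-sum single-machine problem $1||\sum f_j$ with arbitrary nonnegative nondecreasing job cost functions.

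I would then invoke the $(4+\epsilon)$-approximation algorithm of~\cite{CMSV17} for $1||\sum f_j$ on these instances, which immediately delivers the stated $4+\epsilon$ bound for $\textsc{Min-Exp}~\mathcal{P}$ for both cost functions. For the conditional value at risk I would apply assertion~1 of Theorem~\ref{thmexp} with $\sigma=4+\epsilon>1$: since $\textsc{Min-Exp}~\mathcal{P}$ is approximable within $\sigma$, the theorem yields that $\textsc{Min-CVaR}_\alpha~\mathcal{P}$ is approximable within $\sigma\rho=(4+\epsilon)\min\{\tfrac{1}{{\rm Pr}_{\min}},\tfrac{1}{1-\alpha}\}=\min\{\tfrac{4+\epsilon}{{\rm Pr}_{\min}},\tfrac{4+\epsilon}{1-\alpha}\}$ for every constant $\alpha\in[0,1)$, which is precisely the asserted guarantee.

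Each individual step is short, so the only point I would treat with care is checking that the $f_j$ genuinely satisfy the hypotheses of~\cite{CMSV17}: I would confirm that their $(4+\epsilon)$-guarantee is stated for arbitrary nonnegative nondecreasing cost functions accessed through a value oracle, and that the polynomial running time survives when each oracle query costs $O(K)$. This verification is the main---indeed essentially the only---obstacle; the expectation rewriting is the reformulation already performed in the text, and the CVaR bound then follows by a one-line appeal to Theorem~\ref{thmexp}.
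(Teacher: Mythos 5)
Your proposal matches the paper's own argument: the paper likewise rewrites $\mathbf{E}[{\rm F}(\pi)]$ as $\sum_{j\in J} f_j(C_j(\pi))$ with nonnegative nondecreasing job-dependent cost functions (relying on deterministic processing times so that $C_j(\pi)$ is scenario-independent), invokes the $(4+\epsilon)$-approximation of~\cite{CMSV17} for $1||\sum f_j$, and obtains the CVaR bound via assertion~1 of Theorem~\ref{thmexp}. Your write-up is in fact slightly more careful than the paper's, since you make the $w_j$ factor and the $1||\sum w_j U_j$ decomposition explicit and check the oracle hypotheses of~\cite{CMSV17}.
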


When the probability distribution in $\mathcal{U}$ is uniform, then the approximation ratio in Theorem~\ref{thmexp} can be improved to $\min\{(4+\epsilon)K, \frac{4+\epsilon}{1-\alpha}\}$.
Since \textsc{Min-Max}~$\mathcal{P}$ is a special case of $\textsc{Min-CVar}_{\alpha}~\mathcal{P}$ with uniform probability distribution and $\alpha$ sufficiently large, we get that
$\textsc{Min-Max}$~$\mathcal{P}$, $\mathcal{P}\in \{1||\sum w_j U_j,\, 1||\sum w_j T_j\}$,
 is approximable within
$(4+\epsilon)K$ for any $\epsilon>0$.

\subsection{The total weighted flow time criterion}
 
In this section we focus on 
 the problems with the total weighted flow time criterion.
 We start by recalling  a well known property
 (see, e.g.,~\cite{MNO13}),
  which states that every such a problem with 
  uncertain processing times and deterministic weights can be transformed into an equivalent problem with uncertain weights and deterministic processing times. This transformation goes as follows. For each processing time scenario $\xi_i$, $i\in [K]$, we invert the role of processing times and weights obtaining the weight scenario $\xi'_i$. Formally, $p_j=w_j$ and $w_j(\xi'_i)=p_j(\xi_i)$ for each $i\in [K]$.  The new scenario set $\mathcal{U}'$ contains scenario $\xi'_i$  with ${\rm Pr}[\xi_i']={\rm Pr}[\xi_i]$ for each $i\in [K]$. We also invert the precedence constraints, i.e. if $i\rightarrow j$ in the original problem, then $j\rightarrow i$ in the new one. Given a feasible schedule $\pi=(\pi(1),\dots,\pi(n))$, let $\pi'=(\pi(n),\dots,\pi(1))$ be the corresponding inverted schedule. Of course, schedule $\pi'$ is feasible for the inverted precedence constraints. It is easy to verify that $f(\pi, \xi_i)=f(\pi',\xi'_i)$ for each $i\in [K]$. In consequence ${\bf CVaR}_\alpha[{\rm F}(\pi)]={\bf CVar}_\alpha[{\rm F'}(\pi')]$ and ${\bf VaR}_\alpha[{\rm F}(\pi)]={\bf VaR}_\alpha[{\rm F'}(\pi')]$, where ${\rm F'}(\pi')$ is the random cost of $\pi'$ for scenario set $\mathcal{U}'$. Hence, the original problem with uncertain processing times and the new one with uncertain weights have the optimal solutions with the same performance measure.

 From now on we make the assumption that 
  the jobs have deterministic processing times $p_j$, $j\in J$ and $w_j(\xi_i)$ is the weight of job $j$ under scenario $\xi_i$, $i\in [K]$. The value of ${\bf CVaR}_\alpha[{\rm F}(\pi)]$, for a fixed schedule $\pi$, can be computed by solving the following optimization problem (see the formulation (\ref{exVC1})b):
 \begin{equation}
\label{exWC1}
	\begin{array}{llll}
		 \min & \displaystyle \gamma + \frac{1}{1-\alpha}\sum_{i\in [K]} {\rm Pr}[\xi_k] u_k\\
		 \text{s.t.}& \gamma + u_k \geq  \displaystyle \sum_{j\in J}w_j(\xi_k) C_j(\pi) & k\in [K]\\
		 & u_k\geq 0 & k\in [K]
	\end{array}
\end{equation}

 Let $\delta_{ij}\in \{0,1\}$, $i,j\in [n]$, be binary variables such that $\delta_{ij}=1$ if job $i$ is processed before job $j$ in
 a schedule constructed. The vectors of all feasible job completion times $(C_1,\dots, C_n)$ can be described by the following system of constraints~\cite{PO80}:
	\begin{equation}
	\label{cCC2}
		 \begin {array}{llll}
				 VC: & C_j=p_j+\sum_{i\in J\setminus\{j\}} \delta_{ij} p_i & j\in J\\
				 &\delta_{ij}+\delta_{ji}=1 & i,j\in J, i\neq j \\
				 &\delta_{ij}+\delta_{jk}+\delta_{ki} \geq 1 & i,j,k \in J\\
				 &\delta_{ij}=1 & i\rightarrow j\\
				 &\delta_{ij}\in \{0,1\}& i,j \in J
		\end{array}
\end{equation}
Let us denote by $VC'$ the relaxation of $VC$, in which the constraints $\delta_{ij}\in \{0,1\}$ are relaxed with $0\leq \delta_{ij}\leq 1$.
It has been proved in~\cite{SH96b,HA97}  that each vector $(C_1,\dots, C_n)$ that satisfies $VC'$ also satisfies the following inequalities:
	\begin{equation}
		\label{Schin}
			\sum_{j\in I} p_jC_j\geq \frac{1}{2}\left((\sum_{j\in I} p_j)^2+\sum_{j\in I} p_j^2\right) \text{ for all } 
			I \subseteq J.
	\end{equation}
The formulations~(\ref{cCC2}) and (\ref{exWC1}) lead to the following mixed integer programming model for 
$\textsc{Min-CVaR}_{\alpha}~1|prec|\sum w_j C_j$ with uncertain weights:
\begin{equation}
\label{exWC3}
	\begin{array}{llll}
		 \min & \displaystyle \gamma + \frac{1}{1-\alpha}\sum_{i\in [K]} {\rm Pr}[\xi_k] u_k\\
		 \text{s.t.}& \gamma + u_k \geq \sum_{j\in J}w_j(\xi_k) C_j & k\in [K]\\
		 & \text{Constraints VC} \\
		 & u_k\geq 0 & k\in [K]
	\end{array}
\end{equation}

We now solve the relaxation of~(\ref{exWC3}), in which $VC$ is replaced with $VC'$.
  Let $(C_1^*, \dots, C_n^*)$ be the relaxed optimal job completion times and $z^*$ be the optimal value of the relaxation. 
Consider discrete random variable $\mathrm{Y}$, which takes the value $\sum_{j\in J} w_j(\xi_i)C^*_j$ with probability ${\rm Pr}[\xi_i]$, $i\in [K]$.
The equality $z^*={\bf CVaR}_\alpha[\mathrm{Y}]$ holds.
We  relabel the jobs so that $C^*_1\leq C^*_2\leq \cdots\leq\ C_n^*$ and form schedule $\pi=(1,2,\dots,n)$
 in nondecreasing order of~$C^*_j$.
Since the vector~$(C_j^*)$ satisfies~$VC'$ it must also satisfy~(\ref{Schin}).  Hence,  setting $I=\{1,\dots,j\}$, we get
	$$\sum_{i=1}^j p_iC^*_i\geq \frac{1}{2}\left((\sum_{i=1}^j p_i)^2+\sum_{i=1}^j p_i^2\right)\geq \frac{1}{2}\left((\sum_{i=1}^j p_i)^2\right).
	$$
	Since $C^*_j\geq C_i^*$ for each $i\in \{1\dots j\}$, we get
	$C^*_j\sum_{i=1}^j p_i\geq \sum_{i=1}^j p_iC^*_i \geq \frac{1}{2}(\sum_{i=1}^j p_i)^2$
	and, finally $C_j=\sum_{i=1}^j p_j \leq 2 C^*_j$ for each $j\in J$ -- this reasoning is 
	the same as in~\cite{SH96b}.

For each scenario $\xi_i\in \mathcal{U}$, the inequality
		$f(\pi,\xi_i)=\sum_{j\in J} w_j(\xi_i)C_j \leq 2 \sum_{j\in J} w_j(\xi_i)C^*_j $
		holds.
		 By Lemma~\ref{lemowa1}, we have
		${\bf CVaR}_\alpha[{\rm F}(\pi)]\leq 2\cdot {\bf CVaR}_\alpha[Y]=2z^*$.
		Since $z^*$ is a lower bound on the value of an optimal solution, $\pi$ is a 2-approximate schedule. Let us summarize the obtained result.
\begin{thm}
\label{thmcappr1}
	 $\textsc{Min-CVaR}_{\alpha}~1|prec|\sum w_j C_j$ is approximable within~2 for each $\alpha\in [0,1)$.
\end{thm}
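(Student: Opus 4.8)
The plan is to formalize the LP-rounding argument developed in the discussion preceding the statement. By the processing-time/weight inversion recalled at the start of the section, it suffices to treat deterministic processing times $p_j$ together with uncertain weights $w_j(\xi_i)$, so that $\textbf{CVaR}_\alpha[\mathrm{F}(\pi)]$ of a schedule $\pi$ equals the objective of the mixed integer program~(\ref{exWC3}). First I would relax the integrality block $VC$ to its continuous version $VC'$. The resulting relaxation is a \emph{polynomial-size} linear program: the ordering variables $\delta_{ij}$ obey only the $O(n^3)$ constraints of $VC'$, and the risk objective contributes $K$ auxiliary variables $u_k$ and the $K$ constraints of~(\ref{exWC1}), so it is solvable in polynomial time, yielding fractional completion times $(C_1^*,\dots,C_n^*)$ and optimal value $z^*$. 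Since $VC'$ relaxes $VC$, the value $z^*$ is a lower bound on the true minimum of $\textbf{CVaR}_\alpha$; moreover $z^*=\textbf{CVaR}_\alpha[\mathrm{Y}]$ for the random variable $\mathrm{Y}$ taking value $\sum_{j\in J}w_j(\xi_i)C_j^*$ with probability ${\rm Pr}[\xi_i]$.

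The rounding step is to relabel the jobs so that $C_1^*\le\cdots\le C_n^*$ and output $\pi=(1,2,\dots,n)$. Two points need checking. First, $\pi$ must respect the precedence constraints: whenever $i\rightarrow j$, the constraint $\delta_{ij}=1$ together with the transitivity (triangle) inequalities of $VC'$ forces $\delta_{kj}\ge\delta_{ki}$ for every $k$, hence $C_j^*\ge C_i^*+p_j\ge C_i^*$, so sorting by $C^*$ preserves $\rightarrow$ (ties broken consistently). This is a routine feasibility verification that I would state but not belabor. Second, I would invoke the Schulz--Hall valid inequalities~(\ref{Schin}), which hold at every point of $VC'$: applying them to $I=\{1,\dots,j\}$ and using $C_j^*\ge C_i^*$ for $i\le j$ gives $C_j^*\sum_{i=1}^j p_i\ge \frac{1}{2}(\sum_{i=1}^j p_i)^2$, whence $C_j=\sum_{i=1}^j p_i\le 2C_j^*$ for every $j$. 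This is the classical per-scenario factor-$2$ bound and is the technically densest part of the argument.

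Finally I would lift this pointwise bound to the risk measure. For every scenario $\xi_i$ one has $f(\pi,\xi_i)=\sum_{j\in J}w_j(\xi_i)C_j\le 2\sum_{j\in J}w_j(\xi_i)C_j^*$, i.e. the realized cost of $\pi$ under $\xi_i$ is at most twice the corresponding value of $\mathrm{Y}$, and the two random variables carry identical scenario probabilities. Lemma~\ref{lemowa1} (with $\gamma=2$) then yields $\textbf{CVaR}_\alpha[\mathrm{F}(\pi)]\le 2\,\textbf{CVaR}_\alpha[\mathrm{Y}]=2z^*$, and combining with $z^*\le\textbf{CVaR}_\alpha[\mathrm{F}(\pi')]$ for an optimal $\pi'$ shows that $\pi$ is a $2$-approximate schedule for every $\alpha\in[0,1)$. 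The main obstacle is conceptual rather than computational: the deterministic guarantee $C_j\le 2C_j^*$ is only \emph{pointwise}, and the crux is that it must transfer to $\textbf{CVaR}$ with no extra loss. This is exactly what the scaling-monotonicity Lemma~\ref{lemowa1} supplies, since it requires only a coordinatewise domination $a_i\le\gamma b_i$ under a common probability vector; it is this device that lets a single LP relaxation simultaneously control all scenarios through one risk functional.
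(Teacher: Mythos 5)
Your proposal is correct and follows essentially the same route as the paper: relax $VC$ to $VC'$ in the model~(\ref{exWC3}), sort jobs by the fractional completion times, bound $C_j\le 2C_j^*$ via the inequalities~(\ref{Schin}), and transfer the per-scenario factor~2 to $\textbf{CVaR}_\alpha$ through Lemma~\ref{lemowa1}. The only difference is cosmetic: you explicitly verify that sorting by $C_j^*$ respects the precedence constraints (a point the paper leaves implicit, as it is standard for this LP relaxation), which is a welcome but not essential addition.
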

This result can be refined when the deterministic $1|prec|\sum w_j C_j$ problem is polynomially solvable (for example, when the precedence constraints form an sp-graph, see, e.g.,~\cite{B07}). In this case $\textsc{Min-Exp}~1|prec|\sum w_j C_j$ is polynomially solvable, and we can also apply Theorem~\ref{thmexp}, which leads to the following result:
\begin{thm}
\label{thmcappr2}
	 If $1|prec|\sum w_j C_j$ is polynomially solvable, then $\textsc{Min-CVaR}_{\alpha}~1|prec|\sum w_j C_j$ is approximable within~$\min\{\frac{1}{1-\alpha}, 2\}$ for each $\alpha\in [0,1)$.
\end{thm}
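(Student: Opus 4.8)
The plan is to combine the generic $2$-approximation already established in Theorem~\ref{thmcappr1} with the expectation-based reduction of Theorem~\ref{thmexp}, and then to output whichever of the two resulting schedules has the smaller conditional value at risk.

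First I would observe that polynomial solvability of the deterministic $1|prec|\sum w_j C_j$ problem transfers to $\textsc{Min-Exp}~1|prec|\sum w_j C_j$. Working in the uncertain-weights setting adopted above, for any schedule $\pi$ the expected cost telescopes as
$${\bf E}[{\rm F}(\pi)]=\sum_{i\in[K]}{\rm Pr}[\xi_i]\sum_{j\in J}w_j(\xi_i)C_j(\pi)=\sum_{j\in J}\Bigl(\sum_{i\in[K]}{\rm Pr}[\xi_i]w_j(\xi_i)\Bigr)C_j(\pi)=\sum_{j\in J}\overline{w}_j\,C_j(\pi),$$
where $\overline{w}_j=\sum_{i\in[K]}{\rm Pr}[\xi_i]w_j(\xi_i)$ is the (deterministic) expected weight of job~$j$. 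Hence $\textsc{Min-Exp}~1|prec|\sum w_j C_j$ is exactly a deterministic $1|prec|\sum w_j C_j$ instance on the same precedence structure with nonnegative weights $\overline{w}_j$, which by hypothesis is solvable in polynomial time.

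Next I would invoke assertion~1 of Theorem~\ref{thmexp} with $\sigma=1$: since $\textsc{Min-Exp}~\mathcal{P}$ is polynomially solvable, an optimal expected-cost schedule $\pi_2$ is a $\rho$-approximation for $\textsc{Min-CVaR}_{\alpha}~\mathcal{P}$, where $\rho=\min\{\tfrac{1}{{\rm Pr}_{\min}},\tfrac{1}{1-\alpha}\}\le\tfrac{1}{1-\alpha}$. In parallel, Theorem~\ref{thmcappr1} supplies a schedule $\pi_1$ with ${\bf CVaR}_\alpha[{\rm F}(\pi_1)]\le 2\cdot\mathrm{OPT}$, where $\mathrm{OPT}$ denotes the optimal conditional value at risk.

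Finally I would compute both $\pi_1$ and $\pi_2$ in polynomial time, evaluate their conditional value at risk via the greedy procedure for~(\ref{cvarmod2}), and return the schedule with the smaller value; its cost is then at most $\min\{2,\tfrac{1}{1-\alpha}\}\cdot\mathrm{OPT}$, giving the claimed ratio. I do not anticipate a genuine obstacle here: the only point requiring a little care is the reduction of the expectation problem to a deterministic instance, since one must check that the induced weights $\overline{w}_j$ are admissible inputs so that the assumed polynomial-time algorithm still applies — but this is immediate, because the precedence structure is unchanged and the algorithm is assumed to handle arbitrary nonnegative weights.
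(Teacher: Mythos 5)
Your proposal is correct and follows essentially the same route as the paper: the paper likewise observes that polynomial solvability of the deterministic problem makes $\textsc{Min-Exp}~1|prec|\sum w_j C_j$ polynomially solvable (your expected-weights computation makes this explicit), then combines Theorem~\ref{thmexp} (assertion~1 with $\sigma=1$, giving ratio $\rho\le\frac{1}{1-\alpha}$) with the $2$-approximation of Theorem~\ref{thmcappr1} to obtain the bound $\min\{\frac{1}{1-\alpha},2\}$. Returning whichever of the two schedules has smaller conditional value at risk is exactly the intended way to realize this minimum.
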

Observe that $\frac{1}{1-\alpha}<2$ for each $\alpha<0.5$. Let us consider  $\textsc{Min-VaR}_{\alpha}~1|prec|\sum w_j C_j$ problem. The value of ${\bf VaR}_\alpha[{\rm F}(\pi)]$, for a fixed schedule $\pi$, can be computed by solving the following MIP problem (see~(\ref{exVC1})a):
\begin{equation}
\label{exVC123}
	\begin{array}{llll}
		 \min & \theta\\
		\text{s.t.} & \displaystyle \sum_{j\in J} w_j(\xi_k)C_j(\pi)-\theta\leq M_k \beta_k & k\in [K]\\
		 &\displaystyle  \sum_{k\in [K]} {\rm Pr}[\xi_i]  \beta_k \leq 1-\alpha\\
		 & \beta_k\in \{0,1\} & k\in [K]
	\end{array}
\end{equation}
where $M_k$ is an upper bound on the schedule cost under scenario $\xi_k$, $k\in [K]$. 
Using the formulation~(\ref{cCC2}) together with~(\ref{exVC1}), we can get a mixed integer programming formulation for $\textsc{Min-VaR}_{\alpha}~1|prec|\sum w_j C_j$. By replacing the constraints $VC$ with relaxed $VC'$ in the constructed model, we get a mixed integer problem with $K$ binary variables. This problem can be solved in polynomial time when $K$ 
is a constant. The same analysis as for  $\textsc{Min-CVaR}_{\alpha}~1|prec|\sum w_j C_j$ (we also use Lemma~\ref{lemowa1}) leads to the following result:
\begin{thm}
\label{thmcappr3}
	 If the number of scenarios is constant, then $\textsc{Min-VaR}_{\alpha}~1|prec|\sum w_j C_j$ is approximable within~2 for each $\alpha\in (0,1]$.
\end{thm}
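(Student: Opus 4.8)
The plan is to mimic the 2-approximation argument already carried out for $\textsc{Min-CVaR}_\alpha~1|prec|\sum w_j C_j$ in Theorem~\ref{thmcappr1}, replacing the CVaR formulation~(\ref{exWC3}) by the VaR formulation built from~(\ref{exVC1})a and~(\ref{cCC2}). First I would write out the mixed integer program for $\textsc{Min-VaR}_\alpha~1|prec|\sum w_j C_j$ explicitly, namely minimize $\theta$ subject to the constraints of~(\ref{exVC123}) together with the completion-time constraints $VC$ from~(\ref{cCC2}). The only integer variables besides the $\delta_{ij}$ are the $K$ selection variables $\beta_k$. Relaxing $VC$ to its continuous version $VC'$ (i.e.\ $0\le \delta_{ij}\le 1$) leaves a mixed integer program whose only remaining binary variables are the $K$ variables $\beta_k$; since $K$ is constant, one can enumerate all $2^K$ fixings of the $\beta_k$ and solve the resulting linear program in each case, so the whole relaxation is solvable in polynomial time.

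Next I would extract from an optimal solution of this relaxation the relaxed completion times $(C_1^*,\dots,C_n^*)$ and the optimal value $\theta^*$, which is a lower bound on the true optimum. As in the proof of Theorem~\ref{thmcappr1}, I relabel the jobs so that $C_1^*\le C_2^*\le\cdots\le C_n^*$ and form the schedule $\pi=(1,2,\dots,n)$. Because $(C_j^*)$ satisfies $VC'$, it satisfies the Schulz inequalities~(\ref{Schin}), and the identical chain of estimates used before yields $C_j(\pi)=\sum_{i=1}^j p_i\le 2C_j^*$ for every $j\in J$. I should check that feasibility with respect to the precedence constraints is preserved: this is the same point as in the CVaR case, where ordering by $C_j^*$ respects the precedence relation because $VC'$ forces $\delta_{ij}=1$ whenever $i\to j$, which drives $C_i^*\le C_j^*$.

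From $C_j(\pi)\le 2C_j^*$ I get, for each scenario $\xi_k$, the pointwise bound $f(\pi,\xi_k)=\sum_{j\in J} w_j(\xi_k)C_j(\pi)\le 2\sum_{j\in J} w_j(\xi_k)C_j^*$, since the weights are nonnegative. Let $\mathrm{Y}$ be the discrete random variable taking the value $\sum_{j\in J} w_j(\xi_k)C_j^*$ with probability ${\rm Pr}[\xi_k]$; then $\theta^*={\bf VaR}_\alpha[\mathrm{Y}]$ by construction of the relaxed objective. Applying Lemma~\ref{lemowa1} with $\gamma=2$ to the pair of random variables $\mathrm{F}(\pi)$ and $\mathrm{Y}$ (whose outcomes are matched scenario by scenario and satisfy $f(\pi,\xi_k)\le 2\cdot(\text{outcome of }\mathrm{Y})$) gives ${\bf VaR}_\alpha[{\rm F}(\pi)]\le 2\,{\bf VaR}_\alpha[\mathrm{Y}]=2\theta^*$. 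Since $\theta^*$ lower-bounds the optimal VaR value, $\pi$ is a $2$-approximate schedule, for every $\alpha\in(0,1]$.

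The main obstacle I expect is justifying that the value of the relaxed program equals ${\bf VaR}_\alpha[\mathrm{Y}]$ rather than something weaker. The formulation~(\ref{exVC1})a is exact only because of the big-$M$ with $M\ge\max_k b_k$; in the relaxed combined model the completion times $C_j^*$ are variables, so the bounds $M_k$ and the interaction between choosing the $\beta_k$ and optimizing the $C_j^*$ must be handled carefully. The clean way around this is to argue that at the relaxed optimum the $\beta_k$ variables take $0/1$ values (because, with the $C_j^*$ fixed, minimizing $\theta$ subject to the covering constraint $\sum_k {\rm Pr}[\xi_k]\beta_k\le 1-\alpha$ is just a VaR computation whose LP relaxation is integral in the relevant sense), so that the relaxed objective genuinely computes ${\bf VaR}_\alpha$ of the induced random variable $\mathrm{Y}$; everything else then follows from the Schulz-inequality estimate and Lemma~\ref{lemowa1} exactly as in Theorem~\ref{thmcappr1}.
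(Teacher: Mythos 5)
Your proposal is correct and follows essentially the same route as the paper: the MIP from~(\ref{exVC123}) combined with~(\ref{cCC2}), relaxation of $VC$ to $VC'$ leaving only the $K$ binary variables $\beta_k$ (solvable in polynomial time for constant $K$), scheduling in nondecreasing order of $C_j^*$ with the Schulz inequalities~(\ref{Schin}) giving $C_j(\pi)\le 2C_j^*$, and Lemma~\ref{lemowa1} with $\gamma=2$ to transfer the factor to ${\bf VaR}_\alpha$. Your closing worry is moot under your own construction: since the $\beta_k$ stay binary (only the $\delta_{ij}$ are relaxed), for the fixed optimal $(C_j^*)$ the inner problem is exactly formulation~(\ref{exVC1})a, whose big-$M$ constants remain valid because relaxed completion times never exceed $\sum_{i\in J} p_i$, so $\theta^*={\bf VaR}_\alpha[\mathrm{Y}]$ holds without any integrality argument.
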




\subsection{The bottleneck objective}

In this section we address a class of single machine  scheduling problems with a bottleneck objective, i.e. in which $f(\pi)=\max_{j\in J} f_j(C_j(\pi))$, where $f_j(t)$ is the cost of completing job $j$ at time $t$. An important and well known example is $1|prec| \max w_j T_j$, in which the maximum weighted tardiness is minimized. This problem can be solved in $O(n^2)$ time by Lawler's algorithm~\cite{LA73}. We will use the fact that the minmax versions of the bottleneck problems are polynomially solvable for a wide class of cost functions~\cite{BFSS16, KZ16d}. In particular, the minmax version of $1|prec| \max w_j T_j$ with uncertain processing times and uncertain due dates can be solved in $O(Kn^2)$ time by using the algorithm constructed in~\cite{KZ16d}. In the following, we will assume that $f(\pi, \xi)=\max_{j\in J} w_j T_j(\pi, \xi)$ for a given scenario $\xi\in \mathcal{U}$. We also assume that job processing times and due dates are nonnegative integers under all scenarios and job weights are positive integers. In consequence, the value of $f(\pi, \xi)$ is a nonnegative integer for each $\xi$.

 Let $f_{\max}$ be an upper bound on the schedule cost over all scenarios.
Let $h:\Qset_{+}^K\rightarrow \Qset_{+}$ be 
a nondecreasing function with respect to~$\Qset_{+}^K$. Suppose that $h$ can be evaluated in $g(K)$ time for a given vector $\pmb{t}=(t_1,\dots,t_K)\in \Zset_{+}^K$. 
Consider the corresponding scheduling problem $\mathcal{PS}$, in which we seek a feasible schedule $\pi\in \Pi$ minimizing $H(\pi)=h(f(\pi,\xi_1),\dots, f(\pi, \xi_K))$.  We can find such a schedule by solving a number of the following  \emph{auxiliary problems}:
given a vector $\pmb{t}\in\Zset_{+}^K$, 
check if $\Pi(\pmb{t})=\{\pi \in \Pi\,:\, 
f(\pi,\xi_i)\leq t_i, i\in [K]\}$ is nonempty,
and if so, return any schedule $\pi_{\pmb{t}}\in \Pi(\pmb{t})$. From the monotonicity of the function~$h$, it follows that for each 
$\pi \in \Pi(\pmb{t})$ the inequality $h(f(\pi,\xi_1),\ldots, f(\pi, \xi_K))\leq 
h(\pmb{t})$ is true. Thus,
 in order to solve the problem $\mathcal{PS}$, it suffices to
  enumerate all possible vectors $\pmb{t}=(t_1,\dots,t_K)$, where $t_i\in \{0,1,\dots,f_{\max}\}$, $i\in [K]$,
 and compute $\pi_{\pmb{t}}\in \Pi(\pmb{t})$ if $\Pi(\pmb{t})$ is nonempty. A schedule~$\pi_{\pmb{t}}$ with the minimum value of $H(\pi_{\pmb{t}})$ is returned.

The crucial step in this method is solving the auxiliary problem. 
We now show that this can be done in polynomial time for the bottleneck problem with the maximum weighted tardiness criterion.
Given any
$\pmb{t}\in\Zset_{+}^K$,  we first form scenario set $\mathcal{U}'$ by specifying the following parameters  for each $\xi_i \in \mathcal{U}$ and $j\in J$:
\[
p_j(\xi_i')=p_j(\xi_i), \; w'_j=1,\; 
d_j(\xi_i')=\max\{C\geq 0\,:\,w_j(C-d_j(\xi_i))\leq t_i\}=t_i/w_j+d_j(\xi_i).
\]
The scenario set $\mathcal{U}'$ can be built in $O(Kn)$ time.
We  then solve the minmax problem with  scenario set $\mathcal{U}'$,
which can be done in $O(Kn^2)$ time by using the algorithm constructed in~\cite{KZ16d}.
If the maximum cost of the schedule $\pi$ returned is 0, then $\pi_{\pmb{t}}=\pi$; otherwise $\Pi(\pmb{t})$ is empty.  
 Since all the risk criteria considered in this paper are nondecreasing functions 
 with respect to  schedule costs over scenarios (see Lemma~\ref{lemowa1} for $\gamma=1$)
 and $g(K)$ is negligible in comparison with $Kn^2$, we get the following result:
 \begin{thm}
$ \textsc{Min-Exp}~\mathcal{P}$,
 $\textsc{Min-VaR}_{\alpha}~\mathcal{P}$ and
 $\textsc{Min-CVaR}_{\alpha}~\mathcal{P}$ are 
 solvable in $O(f_{\max}^K (Kn^2))$ time, when $\mathcal{P}$ is $1|prec|\max w_j T_j$.
 \end{thm}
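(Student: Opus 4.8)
The plan is to recognize that, with the probability distribution in $\mathcal{U}$ and the level $\alpha$ held fixed, each of the three performance measures is a nondecreasing function $h:\Qset_+^K\rightarrow\Qset_+$ of the vector of per-scenario schedule costs $(f(\pi,\xi_1),\dots,f(\pi,\xi_K))$. Once this is verified, the problem falls exactly into the framework $\mathcal{PS}$ developed above, so that the running-time bound follows from the enumeration scheme combined with the $O(Kn^2)$ algorithm for the auxiliary problem. I would therefore first establish the coordinatewise monotonicity of each criterion, and then carry out the time accounting.

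For the monotonicity, write $H(\pi)=h(f(\pi,\xi_1),\dots,f(\pi,\xi_K))$ and let $\mathrm{Y}$ denote the random cost taking value $f(\pi,\xi_i)$ with probability ${\rm Pr}[\xi_i]$. For the expectation, ${\bf E}[\mathrm{Y}]=\sum_{i\in[K]}{\rm Pr}[\xi_i]f(\pi,\xi_i)$ is plainly nondecreasing in every coordinate, since all probabilities are nonnegative. For the value at risk and the conditional value at risk the monotonicity is precisely the special case $\gamma=1$ of Lemma~\ref{lemowa1}: if two cost vectors with identical probabilities satisfy $a_i\le b_i$ for all $i\in[K]$, then ${\bf VaR}_\alpha$ and ${\bf CVaR}_\alpha$ of the corresponding random variables are ordered the same way. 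Hence all three measures are admissible choices of $h$, and their evaluation cost $g(K)$ is $O(K)$ for the expectation and $O(K\log K)$ for ${\bf VaR}_\alpha$ and ${\bf CVaR}_\alpha$ (the greedy computation illustrated in Figure~\ref{fig0}, after sorting), in every case negligible compared with $Kn^2$.

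With monotonicity in hand I would invoke the enumeration directly. There are $(f_{\max}+1)^K$ integer vectors $\pmb{t}$ with $t_i\in\{0,\dots,f_{\max}\}$. For each such $\pmb{t}$ the auxiliary feasibility test is solved by the due-date transformation described above, setting $w_j'=1$ and $d_j(\xi_i')=t_i/w_j+d_j(\xi_i)$, so that $w_jT_j(\pi,\xi_i)\le t_i$ if and only if job $j$ has zero tardiness under $\xi_i'$; the transformed minmax instance on $\mathcal{U}'$ is then solved in $O(Kn^2)$ time by the algorithm of~\cite{KZ16d}, and $\Pi(\pmb{t})$ is nonempty exactly when the returned schedule has maximum cost $0$. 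Summing over all vectors yields the claimed $O(f_{\max}^K(Kn^2))$ bound, with the $g(K)$ term absorbed. Correctness of the enumeration is immediate: an optimal schedule $\pi^*$ has integer costs $t_i^*=f(\pi^*,\xi_i)\le f_{\max}$, so $\pi^*\in\Pi(\pmb{t}^*)$ and, by monotonicity, $H(\pi_{\pmb{t}^*})\le h(\pmb{t}^*)=H(\pi^*)$, whence the best schedule found over all feasible $\pmb{t}$ attains the optimum.

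Since the bulk of the argument is packaged into results already proved, no single step is a genuine obstacle; the theorem is essentially a corollary of the general scheme. The only two points that need care are verifying the equivalence $w_jT_j(\pi,\xi_i)\le t_i\iff C_j(\pi)\le t_i/w_j+d_j(\xi_i)$ that underlies the auxiliary reduction, and confirming that the three criteria inherit coordinatewise monotonicity from the $\gamma=1$ instance of Lemma~\ref{lemowa1}. Both are routine, which is exactly why instantiating the framework with ${\bf E}$, ${\bf VaR}_\alpha$ and ${\bf CVaR}_\alpha$ suffices.
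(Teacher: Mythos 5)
Your proposal is correct and follows essentially the same route as the paper: instantiating the enumeration framework $\mathcal{PS}$ with the three risk criteria, verifying their coordinatewise monotonicity via Lemma~\ref{lemowa1} with $\gamma=1$, and solving each auxiliary feasibility problem $\Pi(\pmb{t})$ through the unit-weight due-date transformation $d_j(\xi_i')=t_i/w_j+d_j(\xi_i)$ followed by the $O(Kn^2)$ minmax algorithm of~\cite{KZ16d}. Your explicit correctness argument (that the optimal schedule's integer cost vector $\pmb{t}^*$ is among the enumerated vectors, so $H(\pi_{\pmb{t}^*})\leq h(\pmb{t}^*)=H(\pi^*)$) only spells out what the paper leaves implicit.
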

 The above running time is pseudopolynomial if $K$ is constant.
  Notice that the special cases, when $\mathcal{P}$ is $1|prec, p_j=1| T_{\max}$ are solvable in $O(Kn^{K+2})$ time, which is polynomial if $K$ is constant (as we can fix  $f_{\max}=n$).

We now show that the problems admit an FPTAS
if $K$ is a constant and $h(\gamma \pmb{t})\leq \gamma h( \pmb{t})$, 
for any $\pmb{t}\in\Qset_{+}^K$, $\gamma\geq 0$.
First we partition the interval $[0,f_{\max}]$ into geometrically increasing subintervals:
$[0,1)\cup \bigcup_{\ell \in [\eta]}[(1+\epsilon)^{\ell-1}, (1+\epsilon)^{\ell})$,
where $\eta=\lceil \log_{1+\epsilon} f_{\max} \rceil$ and $\epsilon\in (0,1)$.
Then we 
enumerate all possible vectors $\pmb{t}=(t_1,\dots,t_K)$, where 
$t_i\in \{0,1\}\cup   \bigcup_{\ell \in [\eta]}\{(1+\epsilon)^{\ell}\}$, $i\in [K]$,
 and find $\pi_{\pmb{t}}\in \Pi(\pmb{t})$ if $\Pi(\pmb{t})\not=\emptyset$. 
 Finally,
 we output a schedule~$\pi_{\hat{\pmb{t}}}$ that minimizes  value of $H(\pi_{\pmb{t}})$ 
 over the nonempty subsets of schedules.
Obviously,  the running time is
$O(( \log_{1+\epsilon} f_{\max})^K (Kn^2+g(K)))=
O((\epsilon^{-1} \log f_{\max})^K (Kn^2+g(K)))$.
Let $\pi^*$ be an optimal schedule.
Fix $\ell_i\in \{0,1,\ldots,\eta\}$  for each $i\in [K]$,
 such that $(1+\epsilon)^{\ell_i-1}\leq f(\pi^*,\xi_i)<(1+\epsilon)^{\ell_i}$,
 where
we assume that  $(1+\epsilon)^{\ell_i-1}=0$ for $\ell_i=0$.
This clearly forces
$\Pi((1+\epsilon)^{\ell_1},\dots,(1+\epsilon)^{\ell_K})\not=\emptyset$.
Moreover, $(1+\epsilon)^{\ell_i}\leq (1+\epsilon) f(\pi^*,\xi_i)$ for $\ell_i$, $i\in [K]$.
By the definition of~$\pi_{\hat{\pmb{t}}}$, we get
$H(\pi_{\hat{\pmb{t}}})\leq h((1+\epsilon)^{\ell_1},\dots,(1+\epsilon)^{\ell_K})$.
Since $h$ is a nondecreasing function and $h(\gamma \pmb{t})\leq \gamma h( \pmb{t})$,
$h((1+\epsilon)^{\ell_1},\dots,(1+\epsilon)^{\ell_K})\leq
(1+\epsilon)h(f(\pi^*,\xi_1),\dots,f(\pi^*,\xi_K))$.
Hence, $H(\pi_{\hat{\pmb{t}}})\leq (1+\epsilon)H(\pi^*)$.
By Lemma~\ref{lemowa1}, the risk criteria satisfy the additional assumption on the function $h(\pmb{t})$.  This leads to the following theorem:
\begin{thm}
$ \textsc{Min-Exp}~\mathcal{P}$,
 $\textsc{Min-VaR}_{\alpha}~\mathcal{P}$ and
 $\textsc{Min-CVaR}_{\alpha}~\mathcal{P}$ admit an FPTAS, when $\mathcal{P}$ is $1|prec|\max w_j T_j$ and the number of scenarios is constant.
 \end{thm}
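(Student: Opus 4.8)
The plan is to upgrade the pseudopolynomial enumeration underlying the preceding $O(f_{\max}^K(Kn^2))$ algorithm into an FPTAS by replacing the full integer grid of threshold vectors with a sparse, geometrically spaced grid. The key observation is that all three objectives can be written uniformly as $H(\pi)=h(f(\pi,\xi_1),\dots,f(\pi,\xi_K))$ for a function $h\colon\Qset_{+}^K\to\Qset_{+}$ that is nondecreasing and satisfies the scaling bound $h(\gamma\pmb{t})\le\gamma h(\pmb{t})$ for every $\gamma\ge 0$. Granting these two structural properties, the pseudopolynomial method needs only a polynomially small set of candidate thresholds in each coordinate.

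First I would build the grid. Partition $[0,f_{\max}]$ as $[0,1)\cup\bigcup_{\ell\in[\eta]}[(1+\epsilon)^{\ell-1},(1+\epsilon)^{\ell})$ with $\eta=\lceil\log_{1+\epsilon}f_{\max}\rceil$, and restrict each coordinate $t_i$ of the threshold vector to the endpoints $\{0,1\}\cup\bigcup_{\ell\in[\eta]}\{(1+\epsilon)^{\ell}\}$. This yields $O(\epsilon^{-1}\log f_{\max})$ values per coordinate, hence $O((\epsilon^{-1}\log f_{\max})^K)$ threshold vectors in total, which is polynomial in the input size and $1/\epsilon$ whenever $K$ is constant. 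For each such $\pmb{t}$ I would solve the auxiliary feasibility problem exactly as in the pseudopolynomial algorithm, namely by rescaling the due dates to $d_j(\xi_i')=t_i/w_j+d_j(\xi_i)$ and running the $O(Kn^2)$ minmax routine of~\cite{KZ16d} to decide whether $\Pi(\pmb{t})\ne\emptyset$ and, if so, extract a witness $\pi_{\pmb{t}}$. Among all nonempty $\Pi(\pmb{t})$ I would return the witness $\pi_{\hat{\pmb{t}}}$ minimizing $H$, giving the stated running time.

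The correctness argument rounds the optimum onto the grid. Let $\pi^*$ be optimal and choose for each $i\in[K]$ the index $\ell_i$ with $(1+\epsilon)^{\ell_i-1}\le f(\pi^*,\xi_i)<(1+\epsilon)^{\ell_i}$, with the convention $(1+\epsilon)^{-1}=0$. Then $\pi^*\in\Pi((1+\epsilon)^{\ell_1},\dots,(1+\epsilon)^{\ell_K})$, so this grid vector is among the nonempty ones, and each of its coordinates overshoots the corresponding optimal per-scenario cost by at most a factor $1+\epsilon$. Combining the optimality of $\pi_{\hat{\pmb{t}}}$ over the grid, monotonicity of $h$, and then the scaling bound $h((1+\epsilon)\pmb{t})\le(1+\epsilon)h(\pmb{t})$, I obtain $H(\pi_{\hat{\pmb{t}}})\le(1+\epsilon)H(\pi^*)$, which is exactly the FPTAS guarantee.

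The main point to secure — and the only place where the three objectives differ — is that ${\bf E}$, ${\bf VaR}_\alpha$ and ${\bf CVaR}_\alpha$ really are nondecreasing and satisfy the scaling inequality in the cost vector. Monotonicity is immediate from the definitions. The scaling bound is precisely the content of Lemma~\ref{lemowa1} applied to two identically distributed cost vectors related by $a_i=\gamma b_i$: it yields ${\bf CVaR}_\alpha[\gamma\,\cdot]\le\gamma\,{\bf CVaR}_\alpha[\cdot]$ and the analogous statement for ${\bf VaR}_\alpha$, while the expectation case is trivial. I expect this verification to be the crux, since everything else is the generic rounding scheme; once these two properties of $h$ are in hand, the running-time and accuracy estimates follow mechanically.
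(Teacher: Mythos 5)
Your proposal is correct and follows essentially the same route as the paper's proof: the same geometric grid $\{0,1\}\cup\{(1+\epsilon)^{\ell}\}_{\ell\in[\eta]}$ of threshold vectors, the same auxiliary feasibility test via rescaled due dates and the $O(Kn^2)$ minmax routine of~\cite{KZ16d}, the same rounding of the optimum onto the grid, and the same appeal to Lemma~\ref{lemowa1} to establish the scaling property $h(\gamma\pmb{t})\leq\gamma h(\pmb{t})$ for ${\bf VaR}_\alpha$ and ${\bf CVaR}_\alpha$. Nothing essential differs, so no further comparison is needed.
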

 
\section{Conclusions and open problems}

In this paper we have discussed a wide class of single machine scheduling problems with uncertain job processing times and due dates. This uncertainty is modeled by a discrete scenario set with  a known probability distribution. In order to compute a solution we have applied the risk criteria, namely, the value at risk and conditional value at risk. The expectation and the maximum criteria are special cases of the risk measures. We have provided a number of negative and positive complexity results for problems with basic cost functions. 
Moreover, we have sharpened  some negative ones obtained in~\cite{AAK11,AC08}.
The picture of the complexity is presented in Tables~\ref{tabs1}-\ref{tabs3}.
Obviously,
 the negative results obtained
  remain true for more general cases, for instance, for the problems with more than one machine.

There is still a number of interesting open problems on the models discussed. The negative results for uncertain due dates assume that the number of due dates scenarios is a part of input. The complexity status of the problems when the number of due date scenarios is fixed (in particular, equals~2) is open. For uncertain processing times, an interesting open problem is $\textsc{Min-Exp}~1||\sum U_j$ (see Table~\ref{tabs2}). There is still a gap between the positive and negative results, in particular, we conjecture that the negative results for $\textsc{Min-Var}~\mathcal{P}$ for uncertain processing times (see Table~\ref{tabs2}) can be strengthen. Now they are just the same as for the $\textsc{Min-Max}~\mathcal{P}$.

\subsubsection*{Acknowledgements}
This work was  supported by
 the National Center for Science (Narodowe Centrum Nauki), grant  2017/25/B/ST6/00486.




\end{document}